\DeclareMathOperator{\tr}{tr}
\newlang{\HALT}{Halt}
\newlang{\BHALT}{BHalt}
\newlang{\NHALT}{NHalt}
\newlang{\BNHALT}{BNHalt}
\newlang{\NHALTAll}{NHaltAll}
\newlang{\BNHALTAll}{BNHaltAll}
\newlang{\LangPCP}{Pcp}
\newlang{\BPCP}{BPcp}
\newlang{\MM}{Mm}
\newlang{\BMM}{BMm}
\newlang{\ZULC}{Zulc}
\newlang{\BZULC}{BZulc}
\newlang{\MPO}{Mpo}
\newlang{\BMPO}{BMpo}
\newlang{\POLY}{Poly}
\newlang{\BPOLY}{BPoly}
\newlang{\STAB}{Tsp}
\newlang{\BSTAB}{BTsp}
\newlang{\TILE}{Tile}
\newlang{\BTILE}{BTile}
\newlang{\REACH}{Reach}
\newlang{\BREACH}{BReach}
\newlang{\MAXCUT}{MaxCut}
\newlang{\BGSE}{BGse}
\newlang{\GSE}{Gse}
\newcommand{\gstar}{\textcolor{gray}{\star}}
\newtheorem{theorem}{Theorem}
\newtheorem{definition}[theorem]{Definition}
\renewenvironment{quote}{%
  \list{}{%
    \leftmargin0.3cm
    \rightmargin\leftmargin
  }
  \item\relax
}
{\endlist}
\definecolor{color1}{HTML}{7FB77E}
\definecolor{color2}{HTML}{B1D7B4}
\definecolor{color3}{HTML}{F7F6DC}
\definecolor{color4}{HTML}{FFC090}
\begin{document}
\title{Many bounded versions of undecidable problems are NP-hard}

	\author{Andreas Klingler}
	\thanks{These authors contributed equally to this work. \\Email: \href{mailto:Andreas.Klingler@uibk.ac.at}{Andreas.Klingler@uibk.ac.at}, \\{\phantom{Email: }}\href{mailto:Mirte.van-der-Eyden@uibk.ac.at}{Mirte.van-der-Eyden@uibk.ac.at}}
	\affiliation{Institute for Theoretical Physics, Technikerstr.\ 21a,  A-6020 Innsbruck, Austria}

	\author{Mirte van der Eyden}
	\thanks{These authors contributed equally to this work. \\Email: \href{mailto:Andreas.Klingler@uibk.ac.at}{Andreas.Klingler@uibk.ac.at}, \\{\phantom{Email: }}\href{mailto:Mirte.van-der-Eyden@uibk.ac.at}{Mirte.van-der-Eyden@uibk.ac.at}}
	\affiliation{Institute for Theoretical Physics, Technikerstr.\ 21a,  A-6020 Innsbruck, Austria}

	\author{Sebastian Stengele}
	\affiliation{Institute for Theoretical Physics, Technikerstr.\ 21a,  A-6020 Innsbruck, Austria}

	\author{Tobias Reinhart}
	\affiliation{Institute for Theoretical Physics, Technikerstr.\ 21a,  A-6020 Innsbruck, Austria}
	
	\author{Gemma De las Cuevas}
	\affiliation{Institute for Theoretical Physics, Technikerstr.\ 21a,  A-6020 Innsbruck, Austria}	
	
	\date{\today}

\begin{abstract}
Several physically inspired problems have been proven undecidable; examples are the spectral gap problem and the membership problem for quantum correlations. Most of these results rely on reductions from a handful of undecidable problems, such as the halting problem, the tiling problem, the Post correspondence problem or the matrix mortality problem. All these problems have a common property: they have an $\NP$-hard bounded version. This work establishes a relation between undecidable unbounded problems and their bounded  $\NP$-hard versions. Specifically, we show that $\NP$-hardness of a bounded version follows easily from the reduction of the unbounded problems. This leads to new and simpler proofs of the $\NP$-hardness of bounded version of the Post correspondence problem, the matrix mortality problem, the positivity of matrix product operators, the reachability problem, the tiling problem, and the ground state energy problem. 
This work sheds light on the intractability of problems in theoretical physics and on the computational consequences of bounding a parameter. 
\end{abstract}

\maketitle

\section{Introduction}
\label{sec:introduction}

Many problems in quantum information and quantum many-body physics are undecidable. 
This includes the spectral gap of  physical systems \cite{Cu15,Ba18b}, membership problems for quantum correlations \cite{Sl19, Sl19b, Ji21, Fu21, Mo21}, properties of tensor networks \cite{De15, Kl14, Sc20}, measurement occurrence and reachability problems \cite{Ei12, Wo11}, and many more \cite{De21b, Ey21, Bl03, Sc21b, Fr16}. In addition, other problems are believed to be undecidable, such as detecting quantum capacity \cite{Cu14}, distillability of entanglement \cite{Wo11}, or tensor-stable positivity \cite{Ey21}.

All these problems have a common theme: They ask for a property that includes an unbounded parameter. For example, in a quantum correlation scenario, the dimension of the shared quantum state between the two parties may be unbounded. Also properties of many-body systems, like the spectral gap, are statements involving arbitrarily large system sizes.

On the other hand, many problems in science, engineering, and mathematics are $\NP$-hard \cite{Wi19b}.  
Some examples relevant for physics are finding the ground state energy of an Ising model \cite{Ba82}, 
the training of variational quantum algorithms \cite{Bi21}, or the quantum separability problem \cite{Gu03,Gh09}, and many more \footnote{There are thousands of NP-complete problems. More than three hundred of them are presented in Ref.\ \cite{Ga79}.}.
These problems typically concern properties where all size parameters are bounded or even fixed. For example, the ground state energy problem concerns the minimal energy of Hamiltonians with fixed system size.

This highlights an analogy between certain classes of problems: 
an \emph{unbounded problem} tests a property for an unbounded number of occurrences (which can be generated recursively), 
whereas the corresponding \emph{bounded version} tests the same property for a bounded number of situations. 
This includes, for example, testing a certain property of a translational  invariant spin system for all system sizes, or up to a certain size.  
One common observation in this context is that bounded versions of undecidable problems are $\NP$-hard. 
This observation was already made in \cite{Kl14, Bl97, Sc20} for various examples, as well as in \cite[Chapter 3]{Wi19b}.

Despite this analogy, the techniques used to prove $\NP$-hardness and undecidability often differ. While proofs of undecidability mainly rely on reductions from the halting problem, the Post correspondence problem or the Wang tiling problem, $\NP$-hardness proofs mainly rely on reductions from the satisfiability problem $\SAT$, or from $\NP$-complete graph problems like the $3$-coloring problem or $\MAXCUT$.

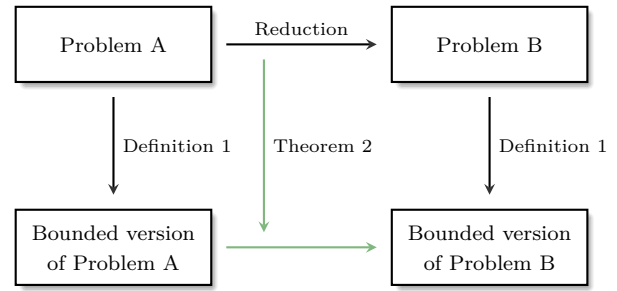
\begin{figure}[t]\centering
\begin{tikzpicture}

\draw[shade, blur shadow={shadow blur steps=8,shadow blur extra rounding=1.3pt, shadow scale=0.98}] (-1.3,2) rectangle (1.3,3);
\draw[thick, fill=white] (-1.3,2) rectangle (1.3,3);

\draw[shade, blur shadow={shadow blur steps=8,shadow blur extra rounding=1.3pt, shadow scale=0.98}] (-1.3,-0.7) rectangle (1.3,0.3);
\draw[thick, fill=white] (-1.3,-0.7) rectangle (1.3,0.3);

\draw[shade, blur shadow={shadow blur steps=8,shadow blur extra rounding=1.3pt, shadow scale=0.98}] (3.7,2) rectangle (6.3,3);
\draw[thick, fill=white] (3.7,2) rectangle (6.3,3);

\draw[shade, blur shadow={shadow blur steps=8,shadow blur extra rounding=1.3pt, shadow scale=0.98}] (3.7,-0.7) rectangle (6.3,0.3);
\draw[thick, fill=white] (3.7,-0.7) rectangle (6.3,0.3);

\draw node at (0,2.5) {\footnotesize Problem A};

\draw node at (5,2.5) {\footnotesize Problem B};

\draw node at (0,0) {\footnotesize Bounded version};
\draw node at (0,-0.4) {\footnotesize of Problem A};

\draw node at (5,0) {\footnotesize Bounded version};
\draw node at (5,-0.4) {\footnotesize of Problem B};

\draw[thick,  -stealth] (1.5,2.5) -- (3.5,2.5) node[midway, above] {\scriptsize Reduction};
\draw[thick, color1, -stealth] (1.5,-0.2) -- (3.5,-0.2) node[midway, below] {\scriptsize };

\draw[thick,  -stealth] (5,1.8) -- (5,0.5) node[midway, right] {\scriptsize \cref{def:boundedVersion}};
\draw[thick,  -stealth] (0,1.8) -- (0,0.5) node[midway, right] {\scriptsize \cref{def:boundedVersion}};

\draw[thick, color1, -stealth] (2,2.3) -- (2,0) node[black, midway, right] {\scriptsize \cref{thm:boundedHardness}};
%\node at (2.6,1) {\scriptsize Thm. 1};

\end{tikzpicture}
\caption{If Problem B is at least as hard as Problem A (i.e.\ there is a reduction from A to B), 
is the bounded version of Problem B at least as hard as the bounded version of Problem A?   
\Cref{thm:boundedHardness} gives a sufficient condition  
when this is the case by reusing the reduction between their unbounded versions. }
\label{fig:intro}
\end{figure}

In this work, we establish a relation between undecidable problems and certain $\NP$-hard problems. 
Specifically, we define a bounded version of a problem 
and a method to leverage the reduction from unbounded problems to their corresponding bounded problems (see \Cref{fig:intro}).
Then we present two versions of the halting problem whose bounded versions are $\NP$-hard, 
and use these, together with our method, 
to provide simple and unified proofs of the $\NP$-hardness of the bounded version of
the Post correspondence problem, 
the matrix mortality problem, 
the positivity of matrix product operators, 
the reachability problem, 
the tiling problem, and  
the ground state energy problem.  

This work sheds light on the various intractability levels of problems used in theoretical physics by     
highlighting the computational consequences of bounding a parameter.
More generally, this work is part of a tradition of studying problems from a  computational perspective, which   has proven extremely successful in mathematics and beyond
\cite{Wi19b}. For example, the hardness results of the ground state energy problem rule out a tractable solution of the ground state for a given Hamiltonian, both for unbounded system sizes as well as a fixed system size.

This paper is structured as follows. 
In \Cref{sec:framework}, we present a definition of bounding 
and a method to leverage the reduction from unbounded problems to their corresponding bounded versions. 
In \Cref{sec:root}, we present the two halting problems which are the root undecidable problems. 
In \Cref{sec:examples}, we apply our framework to many examples. 
In \Cref{sec:questions}, we conclude and discuss future directions. 
The appendix contains basic background on computational complexity (\Cref{app:comput}), 
the hardness proofs of the root undecidable problems (\Cref{app:HaltingProofs}) 
and more details on the discussed examples (\Cref{app:Examples}). 

\section{Bounding}
\label{sec:framework}

In this section, we  present a definition of a bounded version of a language (\cref{ssec:definition}), and  a method to leverage the reduction from unbounded problems to their corresponding bounded versions (\cref{ssec:leveraging}).  
For a short introduction to computational complexity, we refer the reader to \cref{app:comput}. To the best of our knowledge, no prior work introduces or studies bounded versions of problems from a general systematic perspective.

\subsection{Definition of bounding}
\label{ssec:definition}

Let $\Sigma$ be a finite alphabet and $\Sigma^{*}$ the set of all words generated from $\Sigma$. A language $L \subseteq \Sigma^{*}$ encodes all the yes-instances of a given problem, i.e.\ $x \in L$ if $x$ is a yes-instance and $x \notin L$ if $x$ is a no-instance.

We now define a bounded version $L_B$ of $L$. For this purpose, we add a second parameter $n \in \mathbb{N}$ to every yes-instance in $L$. This parameter acts as an acceptance threshold for every yes-instance $x \in L$ and is encoded in unary, i.e.\ for $1 \in \Sigma$, every element of $L_B$ is of the form $\langle x, 1^n \rangle$, where $1^n$ represents the $n$-fold concatenation of $1$.

\begin{definition}
\label{def:boundedVersion}
Let $L \subseteq \Sigma^{*}$ be a language. A language
$$L_B \subseteq \big\{\langle x, 1^n \rangle \mid x \in \Sigma^{*}, n \in \mathbb{N}\big\}$$
is called a \emph{bounded version of} $L$ if
\begin{enumerate}[label=(\roman*)]
	\item\label{cond:1} $x \in L \quad \Longleftrightarrow \quad \exists n \in \mathbb{N}: \langle x, 1^n \rangle \in L_B$.
	\item\label{cond:2} $\langle x, 1^n \rangle \in L_B \quad \Longrightarrow \quad \langle x, 1^{n+1}\rangle \in L_B$.
\end{enumerate}
\end{definition}

We shall often refer to $L$ as the \emph{unbounded} language of $L_B$.  

First, note that the definition of bounded versions relies only on the existence of a parameter $n$ in the problem that acts accordingly. While most problems we consider in this paper are $\RE$-complete, \Cref{def:boundedVersion} applies to languages of arbitrary complexity. Moreover, note that the bounding parameter can also be encoded differently. For example, if the parameter is encoded in binary, most of the bounded version would be $\NEXP$-hard instead of $\NP$-hard. Finally, we remark that the process of bounding a language can be reversed. Given a language $L_B$ with instances of the form $\langle x, 1^n\rangle$ satisfying only Condition \ref{cond:2}, there is a unique language $L$, defined via \ref{cond:1}, which is the unbounded language of $L_B$.

Many problems mentioned in the introduction contain a parameter that gives rise to a bounded version according to \Cref{def:boundedVersion}. This parameter can be the system size for tensor network and spectral gap problems, or the dimension of the entangled state for quantum correlation scenarios; we will present many such examples in \Cref{sec:examples}. 

As an example, let us consider the halting problem $\HALT$ with its known bounded version $\BHALT$. The former takes instances $\langle T,x_0 \rangle$ with a description $T$ of a Turing machine and an input $x_0$. An instance $\langle T, x_0\rangle$ is accepted if and only if the Turing machine $T$ halts on $x_0$. The bounded halting problem takes instances $\langle T,x_0,1^n \rangle$, which are accepted if and only if the Turing machine halts on $x_0$ within $n$ computational steps. $\BHALT$ is indeed a bounded version according to \Cref{def:boundedVersion} since halting of a Turing machine is equivalent to the existence of a finite halting time, and halting within $n$ steps implies halting within $n+1$ steps. 

We remark that in \cref{def:boundedVersion} there is some freedom in the choice of the bounding parameter. For example, for every non-decreasing, unbounded function $f: \mathbb{N} \to \mathbb{N}$, the language
$$\BHALT_f \coloneqq \big\{\langle T,x_0,1^n \rangle \mid T \text{ halts on } x_0 \text{ in } f(n) \text{ steps} \big\}$$
is also a bounded version of $\HALT$. In this paper, we will focus on the simplest versions setting $f = \textrm{id}$ in all examples.

\subsection{Leveraging reductions to the bounded case}
\label{ssec:leveraging}

Given the hardness of the unbounded languages, what can we say about the bounded ones?  
We will now give a condition to leverage a reduction of unbounded problems to a reduction between the  corresponding bounded problems.
This results in a method to prove hardness results of many bounded versions of undecidable problems, as we will see in \Cref{sec:examples}.

Let $L_B$ be a bounded version of $L \subseteq \Sigma^{*}$. 
For $x \in \Sigma^{*}$, we define the threshold parameter
$$
n_{\min, L}[x] \coloneqq \inf \{n \in \mathbb{N}: \langle x, 1^n \rangle \in L_B \}
$$
where we set $\inf \emptyset = \infty$. In other words, $n_{\min}[x]$ denotes the minimum value of $n$ leading to an accepting instance of $L_B$.  
Note that $n_{\min}[x] < \infty$ for every $x \in L$ due to \ref{cond:1} of \Cref{def:boundedVersion} and $n_{\min}[x] = \infty$ if $x \notin L$. Moreover, $\langle x,1^n \rangle \in L_{B}$ if and only if $n \geq  n_{\min}[x]$ due to \ref{cond:2} of \Cref{def:boundedVersion}.

\begin{theorem}
\label{thm:boundedHardness}
Let $L_1,L_2 \subseteq \Sigma^{*}$ be two languages and $\mathcal{R}: L_1 \to L_2$ a polynomial-time reduction from $L_1$ to $L_2$, i.e.\ $L_1 \leq_{\poly} L_2$. Furthermore, let $L_{B1}$ and $L_{B2}$ be bounded versions of $L_1$ and $L_2$, respectively.

If there is a strictly increasing polynomial $p: \mathbb{N} \to \mathbb{N}$ such that
\begin{equation} 
\label{eq:polynomial}
n_{\min, L_2}[\mathcal{R}(x)] \leq p\big(n_{\min, L_1}[x]\big)
\end{equation}
for every $x \in L$, then
\begin{equation}
\label{eq:bound-reduction}
\langle x, 1^n \rangle \mapsto \langle \mathcal{R}(x), 1^{p(n)}\rangle
\end{equation}
is a polynomial-time reduction from $L_{B1}$ to $L_{B2}$, hence $L_{B1} \leq_{\poly} L_{B2}$.
\end{theorem}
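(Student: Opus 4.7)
The plan is to verify the two ingredients that a polynomial-time reduction requires: (a) the map in \eqref{eq:bound-reduction} is computable in polynomial time in $|\langle x, 1^n\rangle|$, and (b) $\langle x, 1^n\rangle \in L_{B1}$ if and only if $\langle \mathcal{R}(x), 1^{p(n)}\rangle \in L_{B2}$. For (b) I will use the equivalence, noted right before the theorem, that membership in a bounded language is governed by the threshold, i.e.\ $\langle y, 1^m \rangle \in L_{Bi}$ exactly when $m \geq n_{\min, L_i}[y]$ (with $n_{\min} = \infty$ for no-instances).

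For the running-time part, the map sends $\langle x, 1^n\rangle$ to $\langle \mathcal{R}(x), 1^{p(n)}\rangle$. Computing $\mathcal{R}(x)$ takes time polynomial in $|x|$ by assumption. The tail $1^{p(n)}$ has length $p(n)$, which is polynomial in $n$; since $n$ is itself given in unary, this length is polynomial in the input size $|\langle x, 1^n\rangle|$, and can be written down in the same polynomial time. Concatenation and delimiters are negligible.

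For correctness, I would split into two cases. If $x \notin L_1$, then $\mathcal{R}(x) \notin L_2$ by the reduction property, so both $\langle x, 1^n\rangle \notin L_{B1}$ and $\langle \mathcal{R}(x), 1^{p(n)}\rangle \notin L_{B2}$ by condition \ref{cond:1} of \cref{def:boundedVersion}, and the equivalence holds vacuously. If $x \in L_1$, then $\mathcal{R}(x) \in L_2$, both thresholds are finite, and by hypothesis $n_{\min, L_2}[\mathcal{R}(x)] = p(n_{\min, L_1}[x])$. Using the threshold characterisation, $\langle x, 1^n\rangle \in L_{B1}$ iff $n \geq n_{\min, L_1}[x]$, and $\langle \mathcal{R}(x), 1^{p(n)}\rangle \in L_{B2}$ iff $p(n) \geq n_{\min, L_2}[\mathcal{R}(x)] = p(n_{\min, L_1}[x])$. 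The strict monotonicity of $p$ makes these two inequalities equivalent, which closes the argument.

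The only subtle point, and therefore the place I would be most careful about, is precisely this use of strict monotonicity: it guarantees that $p$ is invertible on its range when comparing integers, so that $p(n) \geq p(k)$ implies $n \geq k$. Without this assumption one direction of the equivalence could fail, since $p$ could map a small $n$ above $p(k)$. Beyond this, the argument is essentially bookkeeping with \cref{def:boundedVersion}, and no further obstacles are expected.
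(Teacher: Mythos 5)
Your proposal is correct and follows essentially the same route as the paper's proof: establish polynomial running time from the polynomiality of $\mathcal{R}$ and $p$ (using that $n$ is in unary), then use the threshold characterisation $\langle y,1^m\rangle \in L_{Bi} \Leftrightarrow m \geq n_{\min,L_i}[y]$ together with the strict monotonicity of $p$ to transfer membership. Your explicit case split for $x \notin L_1$ is in fact slightly more careful than the paper's argument, which leaves the infinite-threshold case implicit, but the underlying idea is identical.
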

\begin{proof}
Since $\mathcal{R}$ and $p$ are polynomial-time maps, the map in Equation \eqref{eq:bound-reduction} is also polynomial-time. It remains to show that yes/no-instances are preserved via this map.
We have that $\langle x, 1^n \rangle \in L_{B1}$ if and only if $n \geq n_{\min, L_1}[x]$. This is equivalent to 
$$p(n) \geq p\big(n_{\min, L_1}[x]\big) \geq n_{\min, L_2}[\mathcal{R}(x)]$$
since $p$ is a strictly increasing function. But this is again equivalent to $\langle \mathcal{R}(x), 1^{p(n)} \rangle \in L_{B2}$.
\end{proof}

In words, Condition \eqref{eq:polynomial} demands that there is a polynomial that relates thresholds of $x$ and $\mathcal{R}(x)$ for all $x$. We require that $p$ is strictly increasing instead of mere non-decreasing as we need the equivalence of the statements $n \geq m$ and $p(n) \geq p(m)$ in the proof.

Many known reductions of undecidable problems implicitly contain such a polynomial $p$ in their construction. This gives an almost-for-free proof of the $\NP$-hardness of their bounded problems. However, most of these works do not make this polynomial explicit and therefore do not obtain the $\NP$-hardness results. While the theorem only assumes that $p(n_{\min,L_2}[x])$ upper bounds $n_{\min,L_1}[\mathcal{R}(x)]$, in all examples, we have an equality between these expressions. In \Cref{sec:examples}, we will present many examples of this behavior.

\Cref{thm:boundedHardness} also generalizes to other types of reductions. For example, we obtain an exponential-time reduction between the bounded versions when $\mathcal{R}$ is considered a exponential-time reduction and $p$ being a strictly increasing function that can be computed in exponential time.

\section{Halting problems as root problems}
\label{sec:root}

The result of \cref{thm:boundedHardness} gives only relative statements about hardness. Specifically, it allows to construct a reduction between bounded versions given a reduction between their original problems. To prove $\NP$/$\coNP$-hardness of bounded problems, we need root problems with bounded versions whose complexities are already known. In this section, we review two fundamental undecidable problems and their bounded versions, namely two variants of the halting problem. 

While $\HALT$ and $\BHALT$ are the most basic versions of halting problems, we need variations of the halting problem that take non-deterministic Turing machines as inputs. This is due to the fact that, while $\HALT$ is undecidable, $\BHALT$ is in $\P$. Since we want to prove $\NP$/$\coNP$-hardness of bounded problems, we need root problems with a $\NP$/$\coNP$-hard bounded version to start the reduction from. Therefore, we introduce two non-deterministic versions of $\HALT$, called $\NHALT$ and $\NHALTAll$, with an $\NP$-hard and a $\coNP$-hard bounded version, respectively.

\begin{enumerate}[label=(\alph*)]
	\item The problem $\NHALT$ checks the halting of a non-deterministic Turing machine on the empty tape. An instance is given by a description of a non-deterministic Turing machine $T$, which is accepted if and only if $T$ halts on the empty tape \footnote{In other words, it accepts if and only if there is a computation path such that $T$ halts along this path.}. Its bounded version $\BNHALT$ takes instances $\langle T, 1^n\rangle$ and accepts if and only if $T$ halts on the empty tape in at most $n$ steps. The unbounded problem is $\RE$-hard since it contains the (deterministic) halting problem on the empty tape, which is itself $\RE$-hard. Its bounded version $\BNHALT$ is $\NP$-hard.
	
\item The problem $\NHALTAll$ takes a description of a non-deterministic Turing machines $T$ as an instance, which is accepted if and only if $T$ halts on the empty tape along \emph{all} computation paths. Its bounded version $\BNHALTAll$ is given by instances $\langle T,1^n\rangle$ which are accepted if and only if $T$ halts on the empty tape within $n$ computational steps along \emph{all} computation paths. The unbounded problem is $\RE$-hard, and the bounded version is $\coNP$-hard.
\end{enumerate}

For more details on these problems and their complexity proofs, we refer to \Cref{app:HaltingProofs}. 

$\NHALT$ will be the root problem to prove the hardness of the bounded Post correspondence problem (\Cref{ssec:PCP})  and the bounded matrix mortality problem (\Cref{ssec:ZULCandMM}). 
$\NHALTAll$ will be the root problem to prove the hardness of the bounded Tiling problem (\Cref{ssec:TILING}).

While reductions for undecidable problems usually stem from the deterministic halting problem $\HALT$, here we need non-deterministic halting problems in order to prove $\NP$-hardness of the bounded versions. Canonical extensions of the reductions from $\HALT$ to a non-deterministic halting problem lead to different choices of root problems. For example, the Post correspondence problem has a similar structure as $\NHALT$, while the structure of the tiling problem relates to $\NHALTAll$. We will elaborate on these structures in the corresponding sections.

We expect that other variants of the halting problem serve as root problems for other complexity results;  
see \Cref{sec:questions} for further discussion.

\section{A tree of undecidable problems and their bounded versions}
\label{sec:examples}

In this section, we apply \Cref{thm:boundedHardness} to several undecidable problems in order to prove the $\NP$-hardness of the bounded versions.
The problems studied in this paper are summarized in \Cref{fig:ReductionTree}, where every edge corresponds to one application of the theorem. 
For a detailed treatment of these problems and further examples, we refer to \Cref{app:Examples}.

\hypersetup{linkcolor=gray}
\begin{figure}[thb]\centering
\begin{tikzpicture}[scale=0.88]

%\draw[fill=white, blur shadow={shadow blur steps=8,shadow blur extra rounding=1.3pt}] (1,5.7) rectangle (10.6,-2);
%\draw[fill=white] (1,5.7) rectangle (10.6,-2);

\draw[fill=color2!50!white, draw=none] (-1.5,-2.5) rectangle (4,4.5);

\node at (1.25,-1.2) {\footnotesize $\RE$-hard problems};
\node at (1.25, -1.6) {\footnotesize with $\NP$-hard};
\node at (1.25,-2){\footnotesize bounded version};

\draw[thick,black!50!white, -stealth] (0,0) -- (0.94,0.94) node[midway, yshift=-0.2cm, right] {\tiny Sec. \ref{ssec:PCP}};
\draw[thick,black!50!white, -stealth] (1,1) -- (1.94,1.94) node[midway, right, yshift=-0.2cm] {\tiny Sec. \ref{ssec:ZULCandMM}};
\draw[thick,black!50!white, -stealth] (2,2) -- (2.94,2.94) node[midway, right, yshift=-0.2cm] {\tiny Sec. \ref{ssec:ZULCandMM}};

\draw[thick,black!50!white, -stealth] (2,2) -- (1.06,2.94) node[midway, left, yshift=-0.2cm] {\tiny Sec. \ref{ssec:MPO}};
\draw[thick,black!50!white, -stealth] (1,3) -- (1.94,3.94) node[midway, right, yshift=-0.2cm] {\tiny Sec. \ref{ssec:POLY}};

\draw[thick,black!50!white, -stealth] (1,3) -- (0.06,3.94) node[midway, left, yshift=-0.2cm] {\tiny Sec. \ref{ssec:STAB}};
\draw[thick,black!50!white, -stealth] (1,1) -- (0.06,1.94) node[midway, left, yshift=-0.2cm] {\tiny Sec. \ref{ssec:REACH}};

\draw[fill=black, draw=color2!50!white, line width = 0.3mm] (0,0) circle (2pt) node[below right]{\footnotesize $\NHALT$};
\draw[fill=black, draw=color2!50!white, line width = 0.3mm] (1,1) circle (2pt) node[below right]{\footnotesize $\LangPCP$};
\draw[fill=black, draw=color2!50!white, line width = 0.3mm] (2,2) circle (2pt) node[below right]{\footnotesize $\ZULC$};
\draw[fill=black, draw=color2!50!white, line width = 0.3mm] (3,3) circle (2pt) node[below right]{\footnotesize $\MM$};
\draw[fill=black, draw=color2!50!white, line width = 0.3mm] (0,2) circle (2pt) node[below left]{\footnotesize $\REACH$};
\draw[fill=black, draw=color2!50!white, line width = 0.3mm] (1,3) circle (2pt) node[below left]{\footnotesize $\MPO$};
\draw[fill=black, draw=color2!50!white, line width = 0.3mm] (2,4) circle (2pt) node[below right]{\footnotesize $\POLY$};
\draw[fill=black, draw=color2!50!white, line width = 0.3mm] (0,4) circle (2pt) node[below left]{\footnotesize $\STAB$};

\begin{scope}[xshift=4.9cm]

\draw[fill=color3, draw=none] (-0.6,-2.5) rectangle (3.25,4.5);

\draw[thick,black!50!white, -stealth] (0,1) -- (0.94,1.94) node[midway, right, yshift=-0.2cm] {\tiny Sec. \ref{ssec:TILING}};
\draw[thick,black!50!white, -stealth] (1,2) -- (1.94,2.94) node[midway, right, yshift=-0.2cm] {\tiny Sec. \ref{ssec:GSE}};

\draw[fill=black, draw=color3, line width = 0.3mm] (0,1) circle (2pt) node[below right]{\footnotesize $\NHALTAll$};
\draw[fill=black, draw=color3, line width = 0.3mm] (1,2) circle (2pt) node[below right]{\footnotesize $\TILE$};
\draw[fill=black, draw=color3, line width = 0.3mm] (2,3) circle (2pt) node[below right]{\footnotesize $\GSE$};

\node at (1.375,-1.2) {\footnotesize $\RE$-hard problems};
\node at (1.375,-1.6) {\footnotesize with $\coNP$-hard};
\node at (1.375,-2) {\footnotesize bounded version};

\end{scope}

\end{tikzpicture}
\caption{The problems and reductions considered in this work. $\NHALT$ is the non-deterministic halting problem, $\LangPCP$ is the Post correspondence problem, $\REACH$ is the reachability problem for resource theories, $\ZULC$ is the zero in the upper left corner problem, $\MM$ is the matrix mortality problem, $\MPO$ is the positivity of Matrix product operators problem, $\STAB$ is the stability of positive maps problem and $\POLY$ is the polynomial positivity problem. $\NHALTAll$ is the non-deterministic halting problem on all computational paths, $\TILE$ is the Wang tiling problem, and $\GSE$ is the ground state energy problem.
$\NHALT$ and $\NHALTAll$ are the root problems, and every arrow corresponds to a reduction, explained in the referenced subsection.}
\label{fig:ReductionTree}
\end{figure}
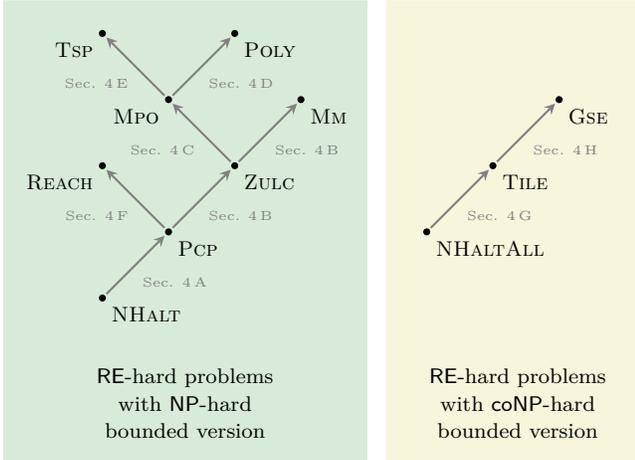
\hypersetup{linkcolor=black}

\subsection{The Post Correspondence Problem}
\label{ssec:PCP}

The Post correspondence problem ($\LangPCP$) \cite{Po46} is an undecidable problem with a particularly simple and intuitive formulation. For this reason, it is often used to prove undecidable results in quantum information theory \cite{Wo11}, including a version of the matrix product operator positivity problem \cite{Kl14}, threshold-problems for probabilistic and quantum finite automata \cite{Bl03}, or reachability problems in resource theories \cite{Sc21b}. It is stated as follows: 
\begin{quote}\emph{Given two finite sets of words, $\{a_1, \ldots, a_k\}$ and $\{b_1, \ldots, b_k\} \subseteq \Sigma^{*}$, is there a finite sequence of indices $i_1, \ldots, i_{\ell}$ such that
$$a_{i_1} a_{i_2} \ldots a_{i_{\ell}} = b_{i_1} b_{i_2} \ldots b_{i_{\ell}} \ ?$$}
\end{quote}
This decision problem is known to be $\RE$-complete via a reduction from the halting problem. Since $a_i$ and $b_i$ only appear in fixed pairs, this problem has an equivalent formulation in terms of dominoes 
$$d_i = \left[\frac{a_i}{b_i} \right].$$
The question is whether there exists a finite arrangement of dominoes that form a match, 
i.e.\ where the upper and lower parts coincide when the words are read across the dominoes (see \Cref{fig:PCPExample}).
\begin{figure}
\begin{tikzpicture}[scale=1.1]

\node at (1.3, 1.5) {Instance};

\node at (5.75,1.5) {Match};

\draw[shade, blur shadow={shadow blur steps=8,shadow blur extra rounding=1.3pt, shadow scale=0.96}] (0,0) rectangle (0.5,1);
\draw (0,0.5) -- (0.5,0.5);
\draw[thick, fill=color1] (0,0) rectangle (0.5,1);
\draw (0,0.5) -- (0.5,0.5);
\node at (0.25, 0.75) {\scriptsize $20$};
\node at (0.25, 0.25) {\scriptsize $0$};

\begin{scope}[xshift=0.7cm]
\draw[shade, blur shadow={shadow blur steps=8,shadow blur extra rounding=1.3pt, shadow scale=0.96}] (0,0) rectangle (0.5,1);
\draw[thick, fill=color2] (0,0) rectangle (0.5,1);
\draw (0,0.5) -- (0.5,0.5);
\node at (0.25, 0.75) {\scriptsize $0$};
\node at (0.25, 0.25) {\scriptsize $01$};
\end{scope}

\begin{scope}[xshift=1.4cm]
\draw[shade, blur shadow={shadow blur steps=8,shadow blur extra rounding=1.3pt, shadow scale=0.96}] (0,0) rectangle (0.5,1);
\draw[thick, fill=color3] (0,0) rectangle (0.5,1);
\draw (0,0.5) -- (0.5,0.5);
\node at (0.25, 0.75) {\scriptsize $012$};
\node at (0.25, 0.25) {\scriptsize $2$};
\end{scope}

\begin{scope}[xshift=2.1cm]
\draw[shade, blur shadow={shadow blur steps=8,shadow blur extra rounding=1.3pt, shadow scale=0.96}] (0,0) rectangle (0.5,1);
\draw[thick, fill=color3!10!white] (0,0) rectangle (0.5,1);
\draw (0,0.5) -- (0.5,0.5);
\node at (0.25, 0.75) {\scriptsize $1$};
\node at (0.25, 0.25) {\scriptsize $20$};
\end{scope}

\begin{scope}[xshift=4.4cm]
\draw[shade, blur shadow={shadow blur steps=8,shadow blur extra rounding=1.3pt, shadow scale=0.96}] (0,0) rectangle (0.5,1);
\draw[thick, fill=color2] (0,0) rectangle (0.5,1);
\draw (0,0.5) -- (0.5,0.5);
\node at (0.25, 0.75) {\scriptsize $0$};
\node at (0.25, 0.25) {\scriptsize $01$};
\end{scope}
\begin{scope}[xshift=4.95cm]
\draw[shade, blur shadow={shadow blur steps=8,shadow blur extra rounding=1.3pt, shadow scale=0.96}] (0,0) rectangle (0.5,1);
\draw[thick, fill=color3!10!white] (0,0) rectangle (0.5,1);
\draw (0,0.5) -- (0.5,0.5);
\node at (0.25, 0.75) {\scriptsize $1$};
\node at (0.25, 0.25) {\scriptsize $20$};
\end{scope}
\begin{scope}[xshift=5.5cm]
\draw[shade, blur shadow={shadow blur steps=8,shadow blur extra rounding=1.3pt, shadow scale=0.96}] (0,0) rectangle (0.5,1);
\draw[thick, fill=color1] (0,0) rectangle (0.5,1);
\draw (0,0.5) -- (0.5,0.5);
\node at (0.25, 0.75) {\scriptsize $20$};
\node at (0.25, 0.25) {\scriptsize $0$};
\end{scope}
\begin{scope}[xshift=6.05cm]
\draw[shade, blur shadow={shadow blur steps=8,shadow blur extra rounding=1.3pt, shadow scale=0.96}] (0,0) rectangle (0.5,1);
\draw[thick, fill=color2] (0,0) rectangle (0.5,1);
\draw (0,0.5) -- (0.5,0.5);
\node at (0.25, 0.75) {\scriptsize $0$};
\node at (0.25, 0.25) {\scriptsize $01$};
\end{scope}
\begin{scope}[xshift=6.6cm]
\draw[shade, blur shadow={shadow blur steps=8,shadow blur extra rounding=1.3pt, shadow scale=0.96}] (0,0) rectangle (0.5,1);
\draw[thick, fill=color3] (0,0) rectangle (0.5,1);
\draw (0,0.5) -- (0.5,0.5);
\node at (0.25, 0.75) {\scriptsize $012$};
\node at (0.25, 0.25) {\scriptsize $2$};
\end{scope}

\end{tikzpicture}
\caption{An instance of  $\LangPCP$ is a set of dominos (left).
This is a yes-instance if they form a match (right), i.e.\ the words on the top and the bottom coincide.}
\label{fig:PCPExample}
\end{figure}
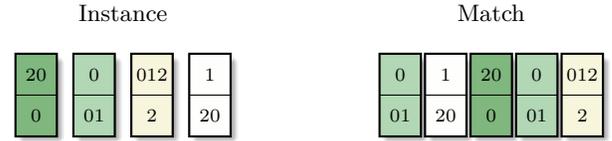

We define a bounded version of $\LangPCP$ that checks for sequences of length at most $n$: 
\begin{quote} \emph{Given a finite set of dominoes $\{d_1, \ldots, d_k\}$ and a number $n \in \mathbb{N}$ in unary, is there a matching arrangement of dominoes $d_{i_1}, \ldots, d_{i_\ell}$ with $\ell \leq n$?}
\end{quote}
This problem, denoted $\BPCP$, is a bounded version of $\LangPCP$ according to \Cref{def:boundedVersion}. It is known to be $\NP$-complete (see \cite{Ga79, Co74, Kl14} for the ideas of the reductions). The basic idea of the reduction is analogous to \Cref{thm:boundedHardness}, i.e.\ using the reduction of the (unbounded) undecidable problems to relate the bounding parameters via a polynomial-time map. Yet, the usual reductions do not directly give rise to a polynomial relation between the bounding parameters, contrary to what is claimed in \cite{Kl14}. 
We will now provide a reduction  $\NHALT \to \LangPCP$ leading to such a relation (and refer to  \Cref{app:PCP} for further details). 
Our approach is similar to that of \cite{Si06}.

\begin{figure*}[thb]\centering
\input{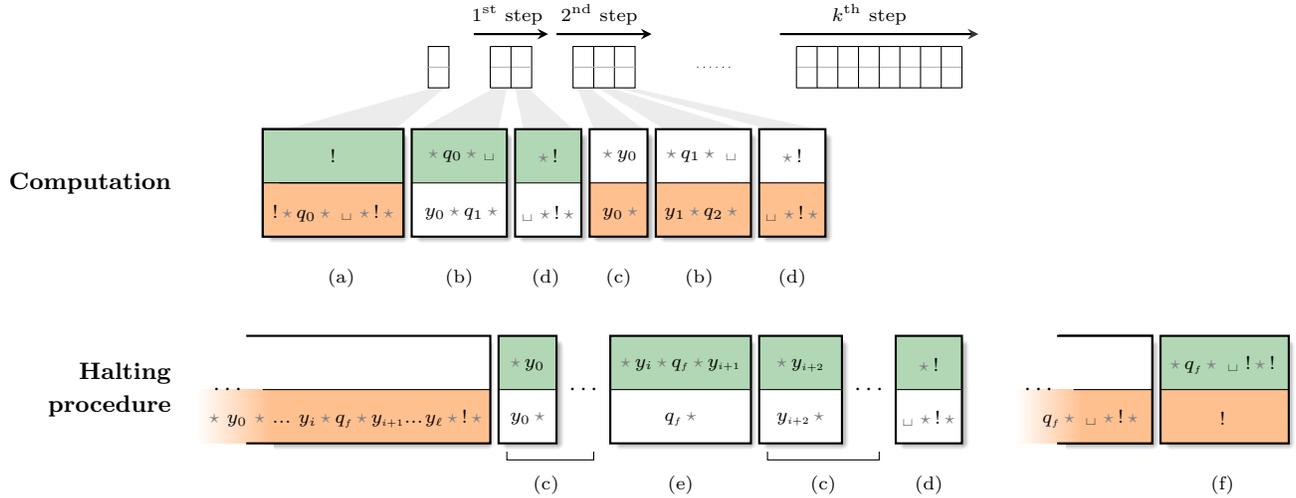}
\captionsetup{width=.8\linewidth, margin=0cm}
\caption{(Top) In the reduction  $\NHALT\to\LangPCP$,  
domino (a) contains the initial configuration of the TM, i.e.\ an empty tape with head at position zero. Each computation step is simulated by copying the lower string to the upper part in green. This is done by applying a transition domino (b), reproducing the tape (c), and adding a new empty tape slot (d). This generates a new string on the bottom, showing the new instantaneous description (white). Repeating the procedure simulates the computation. 
(Bottom) The halting of the Turing machine is mapped to the following match of tiles. 
When the Turing machine reaches the final state $q_f$, the instantaneous description is successively removed by dominoes (e). Adding a final domino (f) guarantees the match.}
\label{fig:PCPReduction}
\end{figure*}

We define a map $\mathcal{R}$ mapping a description of a Turing machine to a set of dominoes,  $\mathcal{R}(T) \coloneqq \langle d_1, \ldots, d_k\rangle$. 
This map mimics the description of $T$ (see \Cref{fig:PCPReduction}). 
For example, $d_1$ is a domino whose lower string is given by 
$$! \ \gstar \ q_0 \ \gstar \  \text{\textvisiblespace} \ \gstar \ ! \ \gstar$$ 
where $!$ and $\gstar$ are separator symbols,
 and $q_0$ and \textvisiblespace \ indicate that the Turing machine head is initially in state $q_0$ acting on an empty tape. 
 
The map $\mathcal{R}$ is a polynomial-time map; in particular, the number of dominoes $k$ is polynomial in the description size of $T$. From the construction of $\mathcal{R}$ it follows that $T$ halts on the empty tape if and only if there exists a match of dominoes $d_1, \ldots, d_k$. This implies that $\mathcal{R}$ is a polynomial-time reduction from the non-deterministic halting problem, which implies that $\LangPCP$ is $\RE$-hard. 

Refining this argument and using \Cref{thm:boundedHardness}, we obtain that $\mathcal{R}$ can be used as a reduction from $\BNHALT$ to $\BPCP$. 
Each computation step of $T$ on the empty tape is simulated by attaching dominoes, as shown in \Cref{fig:PCPReduction}. This procedure guarantees that $T$ halts within $n$ steps if and only if $d_1, \ldots, d_k$ form a match within 
$$p(n) \coloneqq (n+1) \cdot (n+2)$$ 
steps. Hence, the halting time of $T$ is polynomially related to the length of a minimal match of $\mathcal{R}(T)$. This proves that $\BPCP$ is $\NP$-hard by \Cref{thm:boundedHardness}.

Moreover, $\LangPCP$ is $\RE$-complete and $\BPCP$ is $\NP$-complete, 
by taking matching domino arrangements as certificates,
 and a polynomial-time verifier that checks arrangements.

\subsection{The Zero in the upper left corner and Matrix Mortality problem}
\label{ssec:ZULCandMM}

We now present the Matrix mortality problem ($\MM$) and the zero in the upper left corner problem ($\ZULC$) with their bounded versions. Both problems are undecidable and have been applied to prove the undecidability of quantum information problems such as the positivity of Matrix product operators \cite{De15} (see \Cref{ssec:MPO}), the reachability problem \cite{Wo11} (see \Cref{ssec:REACH}), or the measurement occurrence problem \cite{Ei12}. 

The matrix mortality problem is the following: 
\begin{quote} 
\emph{Given $A_1, \ldots, A_{k} \in \mathcal{M}_d(\mathbb{Q})$, is there a finite sequence $i_1, \ldots, i_\ell \in \{1,\ldots, k\}$ such that}
$$A_{i_1} \cdot A_{i_2} \cdots A_{i_\ell} = \textbf{0} \ ?$$
\end{quote}
Here, $\textbf{0}$ denotes the zero matrix, 
and $\mathcal{M}_d(\mathbb{Q})$ the set of $d\times d$ matrices over $\mathbb{Q}$. 
$\ZULC$ is almost identical to $\MM$, 
the only difference is that only the upper left corner of the product $A_{i_1} \cdot A_{i_2} \cdots A_{i_\ell}$ is asked to be zero. 
We define the bounded matrix mortality problem ($\BMM$) and the bounded zero in the upper left corner problem ($\BZULC$) by adding a parameter $n \in \mathbb{N}$ to every instance, 
 and asking whether the desired zeros can be realized within $n$ matrix multiplications.

The undecidability of $\MM$ was first proven by Paterson \cite{Pa70}. Since then, many tighter bounds on the number and size of matrices for both problems have been found (see \cite{Ca14} and references therein). It is also known that $\BMM$ is $\NP$-hard \cite{Bl97}. However, the proof relies on a reduction from the $\NP$-complete problem $\SAT$ and is therefore independent of the original reduction proving undecidability. To the best of our knowledge, the following is the first proof of the $\NP$-hardness of these bounded matrix problems using the same reductions as their unbounded versions.

We briefly sketch the reductions and refer to \Cref{app:MM} for details. Following \cite{Ha01b}, there exist polynomial-time reductions $\mathcal{R}: \LangPCP \to \ZULC$ and $\mathcal{Q}: \ZULC \to \MM$ with the following properties:
\begin{enumerate}[label=(\roman*)]
	\item The dominoes $\mathbf{d} \coloneqq \langle d_1, \ldots, d_k \rangle$ form a match of length $n$ if and only if the matrices $$\langle N_1, \ldots, N_{k'}\rangle \coloneqq \mathcal{R}(\mathbf{d})$$ multiply to a matrix with a zero in the upper left corner within $n$ matrix multiplications.
	\item The matrices $\mathbf{N} \coloneqq \langle N_1, \ldots, N_{\ell} \rangle$ form a zero in the upper left corner using $n$ matrix multiplications if and only if the matrices $$\langle M_1, \ldots, M_{\ell'} \rangle \coloneqq \mathcal{Q}(\mathbf{N})$$ multiply to a zero matrix within $n+2$ matrix multiplications.
\end{enumerate}

Together with \Cref{thm:boundedHardness}, these observations show that 
$$
\langle x, 1^n \rangle \mapsto \langle \mathcal{R}(x), 1^n \rangle
$$ 
is a polynomial-time reduction from $\BPCP$ to $\BZULC$, and 
$$
\langle x, 1^n \rangle \mapsto \langle \mathcal{Q}(x), 1^{n + 2} \rangle$$ 
is a polynomial-time reduction   from $\BZULC$ to $\BMM$. 
This proves that $\BZULC$ and $\BMM$ are $\NP$-hard.

Moreover, $\MM$ and $\ZULC$ are $\RE$-complete,   
and their bounded versions, $\BMM$ and $\BZULC$, are $\NP$-complete by taking matching matrix arrangements as certificates and a polynomial-time verifier checking the statements.

\subsection{The matrix product operator (MPO) positivity problem}
\label{ssec:MPO}

A matrix product operator (MPO) representation is a decomposition of a multipartite operator into local tensors according to a one-dimensional structure \cite{Ve04d, Zw04}. 
A local tensor $B$ defines a diagonal operator $\rho_n(B)$ for every system size $n$ (see \Cref{fig:MPO}). 
More precisely, given a family of $D \times D$ matrices $\left(B_{i}\right)$ for $i \in \{1, \ldots, d\}$, 
the diagonal (translationally invariant) MPO of size $n$ is given by
$$
\rho_n(B) \coloneqq \sum_{i_1, \ldots, i_n = 1}^{d} \tr\left(B_{i_1} \cdots B_{i_n}\right) \ket{i_1\ldots i_n} \bra{i_1 \ldots i_n}.
$$

\begin{figure}[th]\centering
\begin{tikzpicture}[scale=0.6]

\node at (-2,0.5) {$\rho_n(B) \coloneqq$};

\draw[thick,rounded corners=1.5mm] (0,0.5) -- (-0.5, 0.5) -- (-0.5, 1.2) -- (9,1.2) -- (9, 0.5) -- (8.5, 0.5);

\draw[thick, fill=color2, shade, blur shadow={shadow blur steps=8,shadow blur extra rounding=1.3pt, shadow scale=0.9}] (0,0) rectangle (1,1);
\draw[thick] (1,0.5) -- (1.5,0.5);
\draw[line width=1mm, draw=white] (0.5,1) -- (0.5,1.6);
\draw[thick] (0.5,1) -- (0.5,1.6);
\draw[thick] (0.5,0) -- (0.5,-0.6);
\draw[thick, fill=color2] (0,0) rectangle (1,1);
\node at (0.5, 0.5) {$B$};

\begin{scope}[xshift=1.5cm]
\draw[thick, fill=color2, shade, blur shadow={shadow blur steps=8,shadow blur extra rounding=1.3pt, shadow scale=0.9}] (0,0) rectangle (1,1);
\draw[thick] (1,0.5) -- (1.5,0.5);
\draw[line width=1mm, draw=white] (0.5,1) -- (0.5,1.6);
\draw[thick] (0.5,1) -- (0.5,1.6);
\draw[thick] (0.5,0) -- (0.5,-0.6);
\draw[thick, fill=color2] (0,0) rectangle (1,1);
\node at (0.5, 0.5) {$B$};
\end{scope}

\begin{scope}[xshift=3cm]
\draw[thick, fill=color2, shade, blur shadow={shadow blur steps=8,shadow blur extra rounding=1.3pt, shadow scale=0.9}] (0,0) rectangle (1,1);
\draw[thick] (1,0.5) -- (1.5,0.5);
\draw[line width=1mm, draw=white] (0.5,1) -- (0.5,1.6);
\draw[thick] (0.5,1) -- (0.5,1.6);
\draw[thick] (0.5,0) -- (0.5,-0.6);
\draw[thick, fill=color2] (0,0) rectangle (1,1);
\node at (0.5, 0.5) {$B$};
\end{scope}

\begin{scope}[xshift=4.5cm]
\draw[thick, fill=color2, shade, blur shadow={shadow blur steps=8,shadow blur extra rounding=1.3pt, shadow scale=0.9}] (0,0) rectangle (1,1);
\draw[thick] (1,0.5) -- (1.5,0.5);
\draw[line width=1mm, draw=white] (0.5,1) -- (0.5,1.6);
\draw[thick] (0.5,1) -- (0.5,1.6);
\draw[thick] (0.5,0) -- (0.5,-0.6);
\draw[thick, fill=color2] (0,0) rectangle (1,1);
\node at (0.5, 0.5) {$B$};

\draw[thick,dotted] (1.5,0.5) -- (2.6,0.5);

\end{scope}

\begin{scope}[xshift=7.5cm]
\draw[thick, fill=color2, shade, blur shadow={shadow blur steps=8,shadow blur extra rounding=1.3pt, shadow scale=0.9}] (0,0) rectangle (1,1);
\draw[line width=1mm, draw=white] (0.5,1) -- (0.5,1.6);
\draw[thick] (0.5,1) -- (0.5,1.6);
\draw[thick] (0.5,0) -- (0.5,-0.6);
\draw[thick, fill=color2] (0,0) rectangle (1,1);
\node at (0.5, 0.5) {$B$};

\draw[thick] (-0.5,0.5) -- (0,0.5);

\end{scope}

\draw[gray!70!white] (0.2, -0.8) -- (0.2,-1) -- node[midway,below]{\scriptsize $n$} (8.3,-1) -- (8.3, -0.8);

\end{tikzpicture}
\caption{Tensor network representation of the MPO $\rho_n(B)$. 
The MPO problem asks: Given a tensor $B$, is $\rho_n(B)$  positive semidefinite for all $n$?}
\label{fig:MPO}
\end{figure}
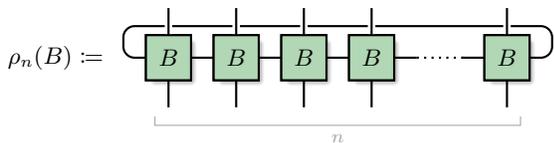

If  these MPO should represent density matrices, 
then $B$ should be such that $\rho_n(B)$ is positive semidefinite for every $n$. 
This property cannot be decided algorithmically, not even for classical states. 
In other words, the following $\MPO$ problem is undecidable: 
\begin{quote}
\emph{Given $B_1, \ldots, B_k \in \mathcal{M}_{D}(\mathbb{Q})$, is there $n \in \mathbb{N}$ such that $\rho_n(B) \ngeqslant 0$?}
\end{quote}

Note that an MPO is usually defined more generally; instead of restricting to families of diagonal (classical) matrices $B_i$, a general matrix product operator is defined via families of $D \times D$ matrices $(B_{i}^{j})$ for $i,j =1, \ldots, d$, addressing also non-diagonal entries of the matrix. However, as diagonal MPOs are contained in this definition, the undecidability of $\MPO$ as we defined it implies that the same problem for arbitrary matrix product operators is also undecidable.

Similar to previous bounded versions, we define $\BMPO$ by bounding the system size $n$: 
\begin{quote}
\emph{Given $B_1, \ldots, B_k$ and $n \in \mathbb{N}$, is there an $\ell \leq n$ such that $\rho_\ell(B) \ngeqslant 0$?}
\end{quote}
Note that $\MPO$ is usually stated in the negated way; yet, we use this definition for consistency with the definition of bounding. 

Let us now sketch the proof that $\BMPO$ is $\NP$-hard by using \Cref{thm:boundedHardness} together with the undecidability proof of \cite{De15}. For details, we refer to \Cref{app:MPO}. Following \cite{De15}, every instance $\langle A_1, \ldots, A_k\rangle$ of $\ZULC$ is mapped to $k+1$ matrices $B_{1}, \ldots, B_{k+1} \in \mathcal{M}_D(\mathbb{Z})$.
These matrices are constructed such that $(B_{i_1} \cdots B_{i_\ell})_{11} = 0$ if and only if $\exists j_1, \ldots, j_{\ell + 1} \in [k+1]$ such that
$$\tr\left(B_{j_{1}} \cdots B_{j_{\ell}} \cdot B_{j_{\ell + 1}}\right) < 0.$$
This implies that the family $A_1, \ldots, A_k$ generates a zero in the upper left corner using $n$ matrix multiplications if and only if $\rho_{n+1}(B) \ngeqslant 0$.
Setting $p(n) = n + 1$ proves that $\BMPO$ is $\NP$-hard by \Cref{thm:boundedHardness}.

Moreover, $\MPO$ is $\RE$-complete and $\BMPO$ is $\NP$-complete by defining negative diagonal entries as certificates.

While $\MPO$ precisely characterizes psd matrix product operators, in practice, algorithms distinguishing MPOs that are sufficiently positive or that violate positivity by at least an error $\varepsilon > 0$ are often acceptable. This is the idea behind weak membership problems. Along these lines, we define the approximate $\MPO$ problem $\MPO_{\varepsilon}$ as follows:

\begin{quote}
\emph{Given $C_1, \ldots, C_k \in \mathcal{M}_{D}(\mathbb{Q})$ with $\tr(\rho_\ell(C)) \leq 1$ for every $\ell \in \mathbb{N}$ and a family of errors $(\varepsilon_\ell)_{\ell \in \mathbb{N}}$ with $0 < \varepsilon_\ell \leq 1/\exp(\ell)$. Decide the following:
\begin{enumerate}[label=(\alph*)]
	\item Accept if $\exists n \in \mathbb{N}: \rho_n(C) \ngeqslant -\varepsilon_n \mathds{1}$.
	\item Reject if $\forall n \in \mathbb{N}: \rho_n(C) \geqslant \varepsilon_n \mathds{1}$.
\end{enumerate}}
\end{quote}

$\MPO_\varepsilon$ is undecidable using the same reduction as above and the fact that $\tr(\rho_n(C))$ increases exponentially in $n$ in the above reduction. Following the usual bounding process, we define $\BMPO_{\varepsilon}$ by bounding $n$:

\begin{quote}\emph{Given $C_1, \ldots, C_k \in \mathcal{M}_{D}(\mathbb{Q})$ with $\tr(\rho_\ell(C)) \leq 1$ for every $\ell \in \mathbb{N}$, a family of errors $(\varepsilon_\ell)_{\ell \in \mathbb{N}}$ with $0 < \varepsilon_\ell \leq 1/\exp(\ell)$ and $n \in \mathbb{N}$. Decide the following:
\begin{enumerate}[label=(\alph*)]
	\item Accept if $\exists \ell \leq n: \rho_\ell(C) \ngeqslant -\varepsilon_n \mathds{1}$.
	\item Reject if $\forall \ell \leq n: \rho_\ell(C) \geqslant \varepsilon_n \mathds{1}$.
\end{enumerate}}
\end{quote}

It follows that $\BMPO_{\varepsilon}$ is a bounded version of $\MPO_{\varepsilon}$ according to \Cref{def:boundedVersion}. Moreover, \Cref{thm:boundedHardness} implies that $\BMPO_{\varepsilon}$ is also $\NP$-hard.

We remark that Kliesch et al.\ \cite{Kl14} present a similar idea, by constructing a reduction from $\LangPCP$ to an alternative version of $\MPO$ and bounding both problems.

\subsection{The polynomial positivity problem}
\label{ssec:POLY}

The undecidability of $\MPO$ leads to the undecidability of other positivity problems. 
One of them concerns deciding the positivity of a certain class of polynomials \cite{De21b}: 
\begin{quote}
\emph{Given a family of polynomials $q_{\alpha, \beta}(\underline{x})$ for $\alpha, \beta \in \{1, \ldots, D\}$ with integer coefficients, is there an $n \in \mathbb{N}$ such that the polynomial
\begin{equation}\label{eq:pn}
p_n(\underline{x}_1, \ldots, \underline{x}_n) \coloneqq 
\sum_{\alpha_1, \ldots, \alpha_n = 1}^{D} 
q_{\alpha_1, \alpha_2}(\underline{x}_1) \cdots q_{\alpha_n, \alpha_1}(\underline{x}_n)
\end{equation}
is not nonnegative (i.e.\ $p_n(\mathbf{a}) < 0$ for some $\mathbf{a} \in \mathbb{R}^{d \cdot n}$)?}
\end{quote}
Here $\underline{x}_i$ denotes a $d$-tuple of variables, for every $i$. We define this problem as $\POLY$ and its bounded version (by restricting to checking nonnegativity of $p_k$ for $k \leq n$) by $\BPOLY$.

Following the proof of \cite{De21b}, $\POLY$ is $\RE$-hard since there exists a polynomial-time map 
$$
\mathcal{R}(\langle B_1, \ldots, B_k\rangle) \coloneqq \left\langle q_{\alpha,\beta}: \alpha, \beta = 1, \ldots, D\right\rangle$$
such that
$$
\rho_n(B) \geqslant 0 \text{ if and only if } p_n \text{ is nonnegative.}
$$
This implies that $\langle B, 1^n\rangle \mapsto \langle \mathcal{R}(B), 1^n\rangle$ defines a reduction from $\BMPO$ to $\BPOLY$. 
It follows that $\BPOLY$ is $\NP$-hard. 
We refer to \Cref{app:POLY} for further details.

\subsection{Stability of positive maps}
\label{ssec:STAB}

Another undecidable problem related to positivity concerns tensor products of positive maps. 
A map 
$$
\mathcal{P}: \mathcal{M}_d(\mathbb{C}) \to \mathcal{M}_d(\mathbb{C})
$$
is called \emph{positive} if it maps positive semidefinite  matrices to positive semidefinite matrices. Such a map is called \emph{$n$-tensor-stable positive} if $\mathcal{P}^{\otimes n}$ is a positive map, and \emph{tensor-stable positive} if it is $n$-tensor-stable positive for all $n \in \mathbb{N}$. The existence of non-trivial tensor-stable positive maps relates to the existence of NPT bound-entangled states \cite{Mu15}.

Let us define the   $n$-fold Matrix Multiplication tensor as 
$$
\ket{\chi_n} \coloneqq \sum_{i_1, \ldots, i_n = 1}^{s} \ket{i_1, i_2} \otimes \ket{i_2, i_3} \otimes \cdots \otimes \ket{i_n, i_1}
$$
and denote the projection to this vector by 
\begin{equation}\label{eq:chin}
\chi_n \coloneqq \ket{\chi_n} \bra{\chi_n}. 
\end{equation}
The following problem is undecidable \cite{Ey21}: 
\begin{quote}
\emph{Given a positive map $\mathcal{P}: \mathcal{M}_d(\mathbb{C}) \to \mathcal{M}_d(\mathbb{C})$, is $\mathcal{P}^{\otimes n}(\chi_n)$ not positive semidefinite for some $n \in \mathbb{N}$?}
\end{quote}
We denote this problem by $\STAB$. 
Its bounded version, $\BSTAB$ takes instances $\langle \mathcal{P}, 1^n\rangle$ and asks the same question for $k$-fold tensor products with $k \leq n$. 

From the proof of \cite{Ey21} it follows that there is a polynomial-time map 
$ \langle B_1, \ldots, B_k \rangle \mapsto \mathcal{P}$ 
such that the resulting $\mathcal{P}$ satisfies that $\mathcal{P}^{\otimes n}(\chi_n) = \rho_n(B)$.
This shows that $\STAB$ is $\RE$-hard by a reduction from $\MPO$. 
Moreover, since
$$
\rho_n(B) \geqslant 0 \quad \text{ if and only if } \quad \mathcal{P}^{\otimes n}(\chi_n) \geqslant 0,
$$ 
it follows $\BSTAB$ is $\NP$-hard by applying \Cref{thm:boundedHardness} together with the fact that $\BMPO$ is $\NP$-hard. We refer to \Cref{app:STAB} for more details on the reduction.

\subsection{The reachability problem in quantum information}
\label{ssec:REACH}

The reachability problem in quantum information concerns the question whether a resource state $\rho$ (given as a density matrix) can be converted to another state $\sigma$ by using only free resource operations from a fixed set $\mathcal{F} \coloneqq \{\Phi_1, \ldots, \Phi_k\}$. More precisely, we define $\REACH$ as follows:
\begin{quote}
\emph{Given density matrices $\rho$, $\sigma \in \mathcal{M}_d(\mathbb{C})$ and a set $\mathcal{F}$ of free operations $\mathcal{M}_d(\mathbb{C}) \to \mathcal{M}_d(\mathbb{C})$, is there a map 
$$\Phi \coloneqq \Phi_{i_n} \circ \Phi_{i_{n-1}} \circ \cdots \circ \Phi_{i_1}$$
in the free semigroup $\mathcal{F}^{*}$ such that $\sigma = \Phi(\rho)$?}
\end{quote}
The free semigroup $\mathcal{F}^{*}$ of $\mathcal{F}$ consists of all maps generated by finite compositions of maps in $\mathcal{F}$. We denote by $\mathcal{F}^{n}$ the set of all operations arising from at most $n$ compositions of maps in $\mathcal{F}$, and define the bounded version $\BREACH$ by replacing $\mathcal{F}^{*}$ with $\mathcal{F}^n$ in the above problem statement.

$\REACH$ is undecidable via a reduction from $\LangPCP$ \cite{Sc21b}. 
We now prove that the bounded version $\BREACH$ is $\NP$-hard. 
We rely on Scandi and Surace's work \cite{Sc21b}, who provide a polynomial-time reduction $\mathcal{R}$ mapping dominoes $d_{i}$ to two types of resource maps $H_i^{\lambda}, G_i^{\lambda}$ for $\lambda \in (0,1)$. The set of free resource operations is then specified by
$$\mathcal{F} = \big\{\mathds{1}, H^{\lambda}_i, G_i^{\lambda} : i=1, \ldots, r \text{ and } \lambda \in (0,1)\big\}.$$
For a state $\rho \in \mathcal{M}_4(\mathbb{C})$, it is shown that 
$$\sigma \coloneqq \lambda \rho + (1 - \lambda)\frac{\mathds{1}}{4}$$ 
is reachable via operations in $\mathcal{F}^{*}$ if and only if there exists a match of the corresponding dominoes in $\LangPCP$. This shows that $\REACH$ is $\RE$-hard. More specifically, there exists a match of length $n$ if and only if 
$$\sigma = G^{\lambda_n}_{i_n} \circ \cdots \circ G^{\lambda_1}_{i_1} \circ H^{\lambda_1}_{i_1} \circ \cdots \circ H^{\lambda_n}_{i_n}(\rho)$$
for a choice $\lambda_1, \ldots, \lambda_n \in (0,1)$. In other words, a threshold parameter $n$ in $\BPCP$ is mapped to a threshold $2n$ in $\BREACH$. This proves that $\BREACH$ is $\NP$-hard by applying \Cref{thm:boundedHardness}.

\subsection{The tiling problem}
\label{ssec:TILING}

Let us now consider the Wang tiling problem. This problem has been used to prove undecidability in many physics-related problems, like the spectral gap problem in 2D \cite{Cu15}, 2D PEPS problems \cite{Sc20}, or the universality of translational invariant, classical spin Hamiltonians in 2D \cite{Ko19}.

\begin{figure}[thb]\centering
\begin{tikzpicture}[scale=0.6]

\node at (2.25,3.5) {Instance};
\node at (10.75,3.5) {Valid tiling};

\draw[thick, shade, blur shadow={shadow blur steps=8,shadow blur extra rounding=1.3pt, shadow scale=0.9}] (0,0) rectangle (1,1);
\draw[fill=color2] (0,0) -- (1,0) -- (0.5,0.5) -- cycle;
\draw[fill=color3] (0,0) -- (0,1) -- (0.5,0.5) -- cycle;
\draw[fill=color4] (1,0) -- (1,1) -- (0.5,0.5) -- cycle;
\draw[fill=color1] (0,1) -- (1,1) -- (0.5,0.5) -- cycle;
\draw (0,0) -- (1,1);
\draw (0,1) -- (1,0);
\draw[thick] (0,0) rectangle (1,1);

\begin{scope}[xshift=1.25cm]
\draw[thick, shade, blur shadow={shadow blur steps=8,shadow blur extra rounding=1.3pt, shadow scale=0.9}] (0,0) rectangle (1,1);
\draw[fill=color1] (0,0) -- (1,0) -- (0.5,0.5) -- cycle;
\draw[fill=color4] (0,0) -- (0,1) -- (0.5,0.5) -- cycle;
\draw[fill=color4] (1,0) -- (1,1) -- (0.5,0.5) -- cycle;
\draw[fill=color1] (0,1) -- (1,1) -- (0.5,0.5) -- cycle;
\draw (0,0) -- (1,1);
\draw (0,1) -- (1,0);
\draw[thick] (0,0) rectangle (1,1);
\end{scope}

\begin{scope}[xshift=2.5cm]
\draw[thick, shade, blur shadow={shadow blur steps=8,shadow blur extra rounding=1.3pt, shadow scale=0.9}] (0,0) rectangle (1,1);
\draw[fill=color1] (0,0) -- (1,0) -- (0.5,0.5) -- cycle;
\draw[fill=color2] (0,0) -- (0,1) -- (0.5,0.5) -- cycle;
\draw[fill=color3] (1,0) -- (1,1) -- (0.5,0.5) -- cycle;
\draw[fill=color4] (0,1) -- (1,1) -- (0.5,0.5) -- cycle;
\draw (0,0) -- (1,1);
\draw (0,1) -- (1,0);
\draw[thick] (0,0) rectangle (1,1);
\end{scope}

\begin{scope}[xshift=3.75cm]
\draw[thick, shade, blur shadow={shadow blur steps=8,shadow blur extra rounding=1.3pt, shadow scale=0.9}] (0,0) rectangle (1,1);
\draw[fill=color4] (0,0) -- (1,0) -- (0.5,0.5) -- cycle;
\draw[fill=color2] (0,0) -- (0,1) -- (0.5,0.5) -- cycle;
\draw[fill=color4] (1,0) -- (1,1) -- (0.5,0.5) -- cycle;
\draw[fill=color3] (0,1) -- (1,1) -- (0.5,0.5) -- cycle;
\draw (0,0) -- (1,1);
\draw (0,1) -- (1,0);
\draw[thick] (0,0) rectangle (1,1);
\end{scope}

\begin{scope}[yshift=1.25cm, xshift=0.5cm]
\draw[thick, shade, blur shadow={shadow blur steps=8,shadow blur extra rounding=1.3pt, shadow scale=0.9}] (0,0) rectangle (1,1);
\draw[fill=color3] (0,0) -- (1,0) -- (0.5,0.5) -- cycle;
\draw[fill=color4] (0,0) -- (0,1) -- (0.5,0.5) -- cycle;
\draw[fill=color1] (1,0) -- (1,1) -- (0.5,0.5) -- cycle;
\draw[fill=color1] (0,1) -- (1,1) -- (0.5,0.5) -- cycle;
\draw (0,0) -- (1,1);
\draw (0,1) -- (1,0);
\draw[thick] (0,0) rectangle (1,1);

\begin{scope}[xshift=1.25cm]
\draw[thick, shade, blur shadow={shadow blur steps=8,shadow blur extra rounding=1.3pt, shadow scale=0.9}] (0,0) rectangle (1,1);
\draw[fill=color3] (0,0) -- (1,0) -- (0.5,0.5) -- cycle;
\draw[fill=color4] (0,0) -- (0,1) -- (0.5,0.5) -- cycle;
\draw[fill=color2] (1,0) -- (1,1) -- (0.5,0.5) -- cycle;
\draw[fill=color3] (0,1) -- (1,1) -- (0.5,0.5) -- cycle;
\draw (0,0) -- (1,1);
\draw (0,1) -- (1,0);
\draw[thick] (0,0) rectangle (1,1);
\end{scope}

\begin{scope}[xshift=2.5cm]
\draw[thick, shade, blur shadow={shadow blur steps=8,shadow blur extra rounding=1.3pt, shadow scale=0.9}] (0,0) rectangle (1,1);
\draw[fill=color2] (0,0) -- (1,0) -- (0.5,0.5) -- cycle;
\draw[fill=color1] (0,0) -- (0,1) -- (0.5,0.5) -- cycle;
\draw[fill=color4] (1,0) -- (1,1) -- (0.5,0.5) -- cycle;
\draw[fill=color2] (0,1) -- (1,1) -- (0.5,0.5) -- cycle;
\draw (0,0) -- (1,1);
\draw (0,1) -- (1,0);
\draw[thick] (0,0) rectangle (1,1);
\end{scope}
\end{scope}

\begin{scope}[xshift=10cm, yshift=0.5cm, scale=0.8]

\draw[thick, shade, blur shadow={shadow blur steps=8,shadow blur extra rounding=1.3pt, shadow scale=0.96}] (-1,-1) -- (3,-1) -- (3,2) -- (2,2) -- (2,3) -- (0,3) -- (0,2) -- (-1,2) -- cycle;

\begin{scope}[xshift=0cm]
\draw[fill=color1] (0,0) -- (1,0) -- (0.5,0.5) -- cycle;
\draw[fill=color2] (0,0) -- (0,1) -- (0.5,0.5) -- cycle;
\draw[fill=color3] (1,0) -- (1,1) -- (0.5,0.5) -- cycle;
\draw[fill=color4] (0,1) -- (1,1) -- (0.5,0.5) -- cycle;
\draw (0,0) -- (1,1);
\draw (0,1) -- (1,0);
\draw[thick] (0,0) rectangle (1,1);
\end{scope}

\begin{scope}[xshift=1cm]
\draw[fill=color2] (0,0) -- (1,0) -- (0.5,0.5) -- cycle;
\draw[fill=color3] (0,0) -- (0,1) -- (0.5,0.5) -- cycle;
\draw[fill=color4] (1,0) -- (1,1) -- (0.5,0.5) -- cycle;
\draw[fill=color1] (0,1) -- (1,1) -- (0.5,0.5) -- cycle;
\draw (0,0) -- (1,1);
\draw (0,1) -- (1,0);
\draw[thick] (0,0) rectangle (1,1);
\end{scope}

\begin{scope}[xshift=2cm]
\draw[fill=color1] (0,0) -- (1,0) -- (0.5,0.5) -- cycle;
\draw[fill=color4] (0,0) -- (0,1) -- (0.5,0.5) -- cycle;
\draw[fill=color4] (1,0) -- (1,1) -- (0.5,0.5) -- cycle;
\draw[fill=color1] (0,1) -- (1,1) -- (0.5,0.5) -- cycle;
\draw (0,0) -- (1,1);
\draw (0,1) -- (1,0);
\draw[thick] (0,0) rectangle (1,1);
\end{scope}

\begin{scope}[yshift=1cm]
\draw[fill=color4] (0,0) -- (1,0) -- (0.5,0.5) -- cycle;
\draw[fill=color2] (0,0) -- (0,1) -- (0.5,0.5) -- cycle;
\draw[fill=color4] (1,0) -- (1,1) -- (0.5,0.5) -- cycle;
\draw[fill=color3] (0,1) -- (1,1) -- (0.5,0.5) -- cycle;
\draw (0,0) -- (1,1);
\draw (0,1) -- (1,0);
\draw[thick] (0,0) rectangle (1,1);
\end{scope}

\begin{scope}[yshift=1cm, xshift=1cm]
\draw[fill=color1] (0,0) -- (1,0) -- (0.5,0.5) -- cycle;
\draw[fill=color4] (0,0) -- (0,1) -- (0.5,0.5) -- cycle;
\draw[fill=color4] (1,0) -- (1,1) -- (0.5,0.5) -- cycle;
\draw[fill=color1] (0,1) -- (1,1) -- (0.5,0.5) -- cycle;
\draw (0,0) -- (1,1);
\draw (0,1) -- (1,0);
\draw[thick] (0,0) rectangle (1,1);
\end{scope}

\begin{scope}[yshift=1cm, xshift=2cm]
\draw[fill=color1] (0,0) -- (1,0) -- (0.5,0.5) -- cycle;
\draw[fill=color4] (0,0) -- (0,1) -- (0.5,0.5) -- cycle;
\draw[fill=color4] (1,0) -- (1,1) -- (0.5,0.5) -- cycle;
\draw[fill=color1] (0,1) -- (1,1) -- (0.5,0.5) -- cycle;
\draw (0,0) -- (1,1);
\draw (0,1) -- (1,0);
\draw[thick] (0,0) rectangle (1,1);
\end{scope}

\begin{scope}[yshift=-1cm]
\draw[fill=color3] (0,0) -- (1,0) -- (0.5,0.5) -- cycle;
\draw[fill=color4] (0,0) -- (0,1) -- (0.5,0.5) -- cycle;
\draw[fill=color1] (1,0) -- (1,1) -- (0.5,0.5) -- cycle;
\draw[fill=color1] (0,1) -- (1,1) -- (0.5,0.5) -- cycle;
\draw (0,0) -- (1,1);
\draw (0,1) -- (1,0);
\draw[thick] (0,0) rectangle (1,1);
\end{scope}

\begin{scope}[yshift=-1cm,xshift=1cm]
\draw[fill=color2] (0,0) -- (1,0) -- (0.5,0.5) -- cycle;
\draw[fill=color1] (0,0) -- (0,1) -- (0.5,0.5) -- cycle;
\draw[fill=color4] (1,0) -- (1,1) -- (0.5,0.5) -- cycle;
\draw[fill=color2] (0,1) -- (1,1) -- (0.5,0.5) -- cycle;
\draw (0,0) -- (1,1);
\draw (0,1) -- (1,0);
\draw[thick] (0,0) rectangle (1,1);
\end{scope}

\begin{scope}[yshift=-1cm,xshift=2cm]
\draw[fill=color1] (0,0) -- (1,0) -- (0.5,0.5) -- cycle;
\draw[fill=color4] (0,0) -- (0,1) -- (0.5,0.5) -- cycle;
\draw[fill=color4] (1,0) -- (1,1) -- (0.5,0.5) -- cycle;
\draw[fill=color1] (0,1) -- (1,1) -- (0.5,0.5) -- cycle;
\draw (0,0) -- (1,1);
\draw (0,1) -- (1,0);
\draw[thick] (0,0) rectangle (1,1);
\end{scope}

\begin{scope}[yshift=2cm]
\draw[fill=color3] (0,0) -- (1,0) -- (0.5,0.5) -- cycle;
\draw[fill=color4] (0,0) -- (0,1) -- (0.5,0.5) -- cycle;
\draw[fill=color2] (1,0) -- (1,1) -- (0.5,0.5) -- cycle;
\draw[fill=color3] (0,1) -- (1,1) -- (0.5,0.5) -- cycle;
\draw (0,0) -- (1,1);
\draw (0,1) -- (1,0);
\draw[thick] (0,0) rectangle (1,1);
\end{scope}

\begin{scope}[yshift=2cm,xshift=1cm]
\draw[fill=color1] (0,0) -- (1,0) -- (0.5,0.5) -- cycle;
\draw[fill=color2] (0,0) -- (0,1) -- (0.5,0.5) -- cycle;
\draw[fill=color3] (1,0) -- (1,1) -- (0.5,0.5) -- cycle;
\draw[fill=color4] (0,1) -- (1,1) -- (0.5,0.5) -- cycle;
\draw (0,0) -- (1,1);
\draw (0,1) -- (1,0);
\draw[thick] (0,0) rectangle (1,1);
\end{scope}

\begin{scope}[yshift=-1cm,xshift=-1cm]
\draw[fill=color4] (0,0) -- (1,0) -- (0.5,0.5) -- cycle;
\draw[fill=color2] (0,0) -- (0,1) -- (0.5,0.5) -- cycle;
\draw[fill=color4] (1,0) -- (1,1) -- (0.5,0.5) -- cycle;
\draw[fill=color3] (0,1) -- (1,1) -- (0.5,0.5) -- cycle;
\draw (0,0) -- (1,1);
\draw (0,1) -- (1,0);
\draw[thick] (0,0) rectangle (1,1);
\end{scope}

\begin{scope}[xshift=-1cm]
\draw[fill=color3] (0,0) -- (1,0) -- (0.5,0.5) -- cycle;
\draw[fill=color4] (0,0) -- (0,1) -- (0.5,0.5) -- cycle;
\draw[fill=color2] (1,0) -- (1,1) -- (0.5,0.5) -- cycle;
\draw[fill=color3] (0,1) -- (1,1) -- (0.5,0.5) -- cycle;
\draw (0,0) -- (1,1);
\draw (0,1) -- (1,0);
\draw[thick] (0,0) rectangle (1,1);
\end{scope}

\begin{scope}[yshift=1cm, xshift=-1cm]
\draw[fill=color3] (0,0) -- (1,0) -- (0.5,0.5) -- cycle;
\draw[fill=color4] (0,0) -- (0,1) -- (0.5,0.5) -- cycle;
\draw[fill=color2] (1,0) -- (1,1) -- (0.5,0.5) -- cycle;
\draw[fill=color3] (0,1) -- (1,1) -- (0.5,0.5) -- cycle;
\draw (0,0) -- (1,1);
\draw (0,1) -- (1,0);
\draw[thick] (0,0) rectangle (1,1);
\end{scope}

\end{scope}

\end{tikzpicture}
\caption{An instance of $\TILE$ is a set of tiles (left). A set of tiles is a yes-instance if there exists a valid tiling of the plane. Part of a potentially valid tiling is shown on the right. In a valid tiling, the colors of adjacent tiles must coincide and the tiles cannot be rotated.}
\end{figure}
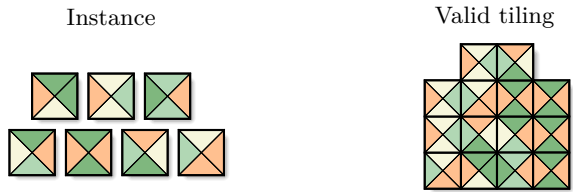

A tile is given by a square with different colors on each side of the tile (see \Cref{fig:Tiling}). Given a finite set of tiles, a valid tiling is an arrangement of tiles whose adjacent edges coincide. Moreover, all tiles have a fixed orientation, i.e.\ they cannot rotate. We study the following variant: 
\begin{quote}
\emph{Given a set of tiles $\mathcal{T} = \{t_1, \ldots, t_k\}$, is it impossible to tile the plane when $t_1$ is in the origin?}
\end{quote}
Note that this problem is usually stated in the negated form, but this formulation is more convenient for our purposes. 
The constraint on the fixed tile in the origin can also be removed \cite{Be66, Ro71}; we stick to this version for simplicity. 
The corresponding bounded version is the following:
\begin{quote}
\emph{Given a set of tiles $\mathcal{T} = \{t_1, \ldots, t_k\}$ and $n \in \mathbb{N}$, is it impossible to tile $\mathbb{Z}_n^2$ when $t_1$ is in the origin?}
\end{quote}
Here we denote by $\mathbb{Z}_n^2 \coloneqq \{-n, \ldots, 0, \ldots, n\}^2$ the square grid of size $(2n + 1) \times (2n + 1)$ around the origin.

Let us now sketch the proof that $\TILE$ is $\RE$-hard and that $\BTILE$ is $\coNP$-hard.   
This will imply that the tiling problem in its usual formulation (``can the plane be tiled?'') is $\coRE$-hard and its bounded version is $\NP$-hard. We refer to \Cref{app:Tiling} for details. 

In contrast to the previous examples, we now construct a reduction from $\NHALTAll$ instead of $\NHALT$. While to check whether $\{d_1, \ldots, d_k\}$ is a yes-instance of $\BPCP$, one needs to find a \emph{single} matching arrangement, to verify whether $\{t_1, \ldots, t_k\}$ is a yes-instance of $\BTILE$ one has to check (for a fixed size $n$) whether \emph{all} arrangements of tiles in $\mathbb{Z}_n^2$ are invalid. This structure is similar to $\NHALTAll$, where for a fixed computation time $n$, one needs to check whether a given Turing machine $T$ halts on \emph{all} computation steps. More precisely, there is a polynomial relation between the bounding parameters of $\BTILE$ and $\BNHALTAll$, as needed for \Cref{thm:boundedHardness}.

We build a polynomial-time reduction from  $\NHALTAll $ to $\TILE$ following \cite{Ro71}. 
The reduction maps a description of a Turing machine $T$ to a set of tiles representing either a slot in the tape or a computational step. 
The (infinite) starting tape is mapped to the fixed origin tile representing the empty tape with head position at zero. 
Filling up a new line corresponds to one computational step. 
This reduction also applies to non-deterministic Turing machines. 

The reduction is such that the tiling cannot be continued after filling up $n$ lines if and only if $T$ halts on all computation paths after at most $n$ computational steps. We refer to \Cref{fig:Tiling}  and \Cref{app:Tiling} for further details on the reduction. This proves that $\TILE$ is undecidable. 
By  \Cref{thm:boundedHardness}, we obtain that $\BTILE$ is $\coNP$-hard, since the maximal halting time $n$ on every computation path is mapped to the termination size $n+1$. 

In addition, $\TILE$ is $\RE$-complete by taking a system size where all tilings terminate as a certificate and an exponential-time verifier checking all tilings of this size. $\BTILE$ is $\coNP$-complete by choosing tilings as a certificate and a polynomial-time verifier checking the validity of the tiling. This highlights that when proving completeness, \emph{not} every construction in the unbounded case trivially translates to the bounded version.

\begin{figure}[thb]\centering
\begin{tikzpicture}

\draw (-3.5,0) -- (2.5,0);
\draw (-3.5,1) -- (2.5,1);
\draw (-3.5,2) -- (2.5,2);
\draw (-3.5,3) -- (2.5,3);
\draw (-3.5,4) -- (2.5,4);

\draw (-3,-0.5) -- (-3,4.5);
\draw (-2,-0.5) -- (-2,4.5); 
\draw (-1,-0.5) -- (-1,4.5); 
\draw (0,-0.5) -- (0,4.5); 
\draw (1,-0.5) -- (1,4.5); 
\draw (2,-0.5) -- (2,4.5);

\draw[draw=none, fill=color1] (-1, 2) -- (0,2) -- (0,1) -- (-1,1) -- cycle;
\draw[very thick] (-1,2) -- (-0.8,2);
\draw[very thick] (-0.2, 2) -- (0,2) -- (0,1) -- (-1,1) -- (-1,2);
\draw[draw=none, fill=color4] (-1,2) -- (0,2) -- (0,3) -- (-1,3) -- cycle;
\draw[very thick] (-0.2, 2) -- (0,2) -- (0,3) -- (-1,3) -- (-1,2) -- (-0.8,2);

\draw[draw=white, thick, fill=color4] (0,2) -- (1,2) -- (1,3) -- (0,3) -- cycle;
\draw[very thick, fill=color4] (0.7,2) -- (1,2) -- (1,3) -- (0,3) -- (0,2) -- (0.3,2);
\draw[very thick, fill=color4] (-1,3) -- (0,3) -- (0,4) -- (-1,4) -- cycle;
\draw[very thick, fill=color4] (0,3) -- (1,3) -- (1,4) -- (0,4) -- cycle;
\draw[draw=none, fill=color4] (-1,4) -- (1,4) -- (1,4.5) -- (-1,4.5) -- cycle;

\draw[very thick] (1,4.5)-- (1,4) -- (0,4) -- (0,4.5);
\draw[very thick] (-1,4.5)-- (-1,4) -- (0,4) -- (0,4.5);
%\draw[very thick] (1,4) -- (1,4.5);
%\draw[very thick] (-1,4) -- (-1,4.5);

\draw[stealth-stealth] (-1,1.5) -- (0,1.5);
\draw[-stealth,gray!50!white] (0,1.5) -- (1,1.5);
\draw[-stealth,gray!50!white] (1,1.5) -- (2,1.5);
\draw[stealth-,gray!50!white] (-3,1.5) -- (-2,1.5);
\draw[stealth-,gray!50!white] (-2,1.5) -- (-1,1.5);

\draw[-stealth, gray!50!white] (1.5,1.5) -- (1.5,2.5) node[midway, fill=white] {\scriptsize \textvisiblespace};
\draw[gray!50!white] (0.5,1.5) -- (0.5,1.85);
\draw[-stealth] (0.5,2.15) -- (0.5,2.5);
\node at (0.5,2) [fill=none] {\scriptsize \textvisiblespace};
\draw (-0.5,1.5) -- (-0.5,1.82) node[above, fill=none] {\scriptsize $q_0$ \textvisiblespace};
\draw[-stealth] (-0.5, 2.15) -- (-0.5,2.5);

\draw[-stealth, gray!50!white] (-1.5,1.5) -- (-1.5,2.5) node[midway, fill=white] {\scriptsize \textvisiblespace};
\draw[-stealth, gray!50!white] (-2.5,1.5) -- (-2.5,2.5) node[midway, fill=white] {\scriptsize \textvisiblespace};

\draw[-stealth](-0.5, 2.5) -- (-0.5,3.5) node[midway, fill=color4] {\scriptsize $s_i$};
\draw[-stealth](-0.5, 2.5) -- (0.5,2.5) node[midway, fill=color4] {\scriptsize $q_{j}$};
\draw[-stealth](0.5, 2.5) -- (0.5,3.5) node[midway, fill=color4] {\scriptsize $q_{j}$ \textvisiblespace};

\draw[-stealth, gray!50!white] (1.5,2.5) -- (1.5,3.5) node[midway, fill=white] {\scriptsize \textvisiblespace};
\draw[-stealth, gray!50!white] (-1.5,2.5) -- (-1.5,3.5) node[midway, fill=white] {\scriptsize \textvisiblespace};
\draw[-stealth, gray!50!white] (-2.5,2.5) -- (-2.5,3.5) node[midway, fill=white] {\scriptsize \textvisiblespace};

\draw[-stealth](0.5, 3.5) -- (-0.5,3.5) node[midway, fill=color4] {\scriptsize $q_{k}$};
\draw[-stealth](-0.5, 3.5) -- (-0.5,4.5) node[midway, fill=color4] {\scriptsize $q_{k} s_{\ell}$};
\draw[-stealth](0.5, 3.5) -- (0.5,4.5) node[midway, fill=color4] {\scriptsize $q_{r}$};

\draw[-stealth, gray!50!white] (1.5,3.5) -- (1.5,4.5) node[midway, fill=white] {\scriptsize \textvisiblespace};
\draw[-stealth, gray!50!white] (-1.5,3.5) -- (-1.5,4.5) node[midway, fill=white] {\scriptsize \textvisiblespace};
\draw[-stealth, gray!50!white] (-2.5,3.5) -- (-2.5,4.5) node[midway, fill=white] {\scriptsize \textvisiblespace};

\draw[thick,stealth-stealth] (-2.8,-1) -- (1.8,-1) node[below, midway] {Instantaneous description};
\node at (-0.5,-1.6) {of the computation};

\draw[thick,-stealth] (3,1.5) -- (3,4.3) node[below, midway, rotate=90] {Time};

\end{tikzpicture}
\caption{In the reduction $\NHALTAll\to\TILE$, 
the instantaneous description of the Turing machine is mapped to a horizontal configuration of tiles, 
and every computational step is mapped to a valid tiling of the horizontal line above. 
The green tile  is fixed at the origin, while the orange tiles realize the computation. The rest of the plane is filled with trivial tiles, such as the empty tiles (bottom) or tiles copying the tape information (left and right). A Turing machine halts along every path within $n$ steps if and only if the corresponding tiling terminates after $n$ horizontal lines.}
\label{fig:Tiling}
\end{figure}
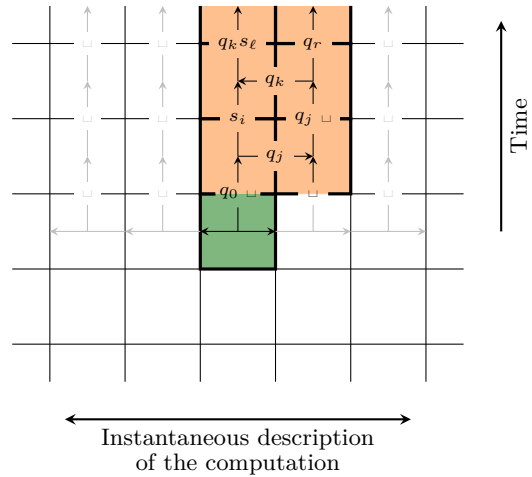

\subsection{Ground State Energy problem}
\label{ssec:GSE}

We now study a version of the ground state energy problem. For this purpose, we consider a spin system on a 2D grid. We assume that every spin takes values in a set $\mathcal{S}$. Given coupling functions $h^{x}, h^{y}: \mathcal{S} \times \mathcal{S} \to \mathbb{N}$ and a local field $h^{\text{loc}}:\mathcal{S} \to \mathbb{N}$, we define the Hamiltonian
$$H_n(\mathbf{s}) = h^{\text{loc}}(s_{00}) + \sum_{\langle \textbf{a},\textbf{b}\rangle_x} h^{x}(s_{\textbf{a}}, s_{\textbf{b}}) + \sum_{\langle \textbf{a},\textbf{b}\rangle_y} h^{y}(s_{\textbf{a}}, s_{\textbf{b}})$$
where $\mathbf{s} = (s_{ij})_{i,j \in \{-n, \ldots, 0, \ldots, n\}}$ is a given spin configuration on the grid $\mathbb{Z}_n^2$ taking values in $\mathcal{S}$ and $s_{\mathbf{a}}, s_{\mathbf{b}}$ denote the elements with coordinates $\mathbf{a}$ and $\mathbf{b}$ in this array. Moreover, 
$\langle \textbf{a}, \textbf{b}\rangle_{x/y}$ denotes all neighbors in $x/y$-direction on $\mathbb{Z}_n^{2}$ where the $\mathbf{a}$ has a smaller $x$/$y$-coordinate than $\mathbf{b}$.
Hence, $H_n$ is translational invariant except for the local field on the spin in the origin.

We start by defining the bounded version of this problem, namely the bounded ground state energy problem $\BGSE$:
\begin{quote}
\emph{Given system size $n \in \mathbb{N}$, non-negative functions $h^{x}, h^{y}$, $h^{\text{loc}}$ and energy $E \in \mathbb{Q}$, 
is the ground state energy $E_{\min}(H_n) > E$?}
\end{quote}
A function $h$ is non-negative if it is non-negative on its whole domain.
Note that $\BGSE$ is indeed a bounded version, as $E_{\min}(H_{n+1}) \geq E_{\min}(H_{n}) > E$ since all couplings are non-negative.
Further note that $\BGSE$ is usually formulated in the negated way, i.e.\ the question is if there exists a spin configuration whose energy is below the threshold $E$.

We now extend $\BGSE$ to an unbounded ground state energy problem $\GSE$:
\begin{quote}
\emph{Given non-negative functions $h^{x}, h^{y}$, $h^{\text{loc}}$ and an energy $E \in \mathbb{Q}$, 
is there an $n \in \mathbb{N}$ such that $E_{\min}(H_n) > E$?}
\end{quote}
Note that $\BGSE$ is the bounded version of $\GSE$ according to \Cref{def:boundedVersion}.

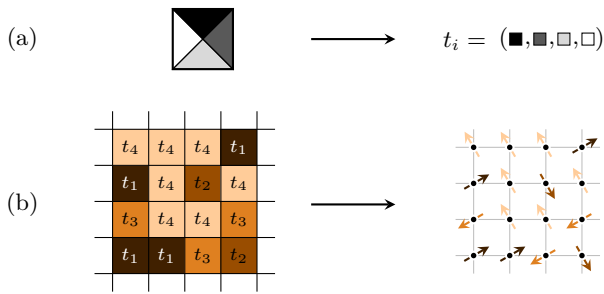
\begin{figure}[thb]\centering
\begin{tikzpicture}[scale=0.8]

\draw[fill=black] (1,1) -- (2,1) -- (1.5,0.5) -- cycle;
\draw[fill=gray!70!black] (2,1) -- (2,0) -- (1.5,0.5) -- cycle;
\draw[fill=gray!30!white] (1,0) -- (2,0) -- (1.5,0.5) -- cycle;

\draw[thick] (1,0) rectangle (2,1);

\draw (1,0) -- (2,1);
\draw (1,1) -- (2,0);

\draw[thick, -stealth] (3.3, 0.5) -- (4.7, 0.5);

\node at (-1.5,0.5) {(a)};

\node at (5.85,0.47) {$t_i = $};
%\node at (0.35,0.47) {$t_i = $};

\node at (6.5,0.5) {$($};
\draw[fill=black] (6.6,0.4) rectangle (6.8,0.6);
\node at (6.9,0.4) {$,$};
\draw[fill=gray!70!black] (7,0.4) rectangle (7.2,0.6);
\node at (7.3,0.4) {$,$};
\draw[fill=gray!30!white] (7.4,0.4) rectangle (7.6,0.6);
\node at (7.7,0.4) {$,$};
\draw[fill=white] (7.8,0.4) rectangle (8,0.6);
\node at (8.1,0.5) {$)$};

%\draw[gray] (0.2,-0.7) -- (7.5,-0.7);

%-------------------------------------------

\draw[thick, -stealth] (3.3, -2.25) -- (4.7, -2.25);
\node at (-1.5,-2.25) {(b)};

\begin{scope}[yshift = -4cm, scale=1.2]

%\draw[draw=none, fill=orange!40!white] (0,0) rectangle (0.5,0.5) node[midway] {\scriptsize $t_1$};
%\draw[draw=none, fill=orange!60!black] (0.5,0) rectangle (1,0.5);
%\draw[draw=none, fill=orange!40!white] (1,0) rectangle (1.5,0.5);
%\draw[draw=none, fill=gray!50!white] (1.5,0) rectangle (2,0.5);
%\draw[draw=none, fill=orange!40!white] (2,0) rectangle (2.5,0.5);

\draw[draw=none, fill=orange!25!black] (0,0.5) rectangle (0.5,1) node[midway] {\scriptsize \color{white}{$t_1$}};
\draw[draw=none, fill=orange!25!black] (0.5,0.5) rectangle (1,1) node[midway] {\scriptsize \color{white}{$t_1$}};
\draw[draw=none, fill=orange!75!gray] (1,0.5) rectangle (1.5,1) node[midway] {\scriptsize {$t_3$}};
\draw[draw=none, fill=orange!60!black] (1.5,0.5) rectangle (2,1) node[midway] {\scriptsize $t_2$};
%\draw[draw=none, fill=orange!75!gray] (2,0.5) rectangle (2.5,1 node[midway] {\scriptsize $t_1$});

\draw[draw=none, fill=orange!75!gray] (0,1) rectangle (0.5,1.5) node[midway] {\scriptsize {$t_3$}};
\draw[draw=none, fill=orange!40!white] (0.5,1) rectangle (1,1.5) node[midway] {\scriptsize $t_4$};
\draw[draw=none, fill=orange!40!white] (1,1) rectangle (1.5,1.5) node[midway] {\scriptsize $t_4$};
\draw[draw=none, fill=orange!75!gray] (1.5,1) rectangle (2,1.5) node[midway] {\scriptsize {$t_3$}};
%\draw[draw=none, fill=orange!25!black] (2,1) rectangle (2.5,1.5);

\draw[draw=none, fill=orange!25!black] (0,1.5) rectangle (0.5,2) node[midway] {\scriptsize \color{white}{$t_1$}};
\draw[draw=none, fill=orange!40!white] (0.5,1.5) rectangle (1,2) node[midway] {\scriptsize $t_4$};
\draw[draw=none, fill=orange!60!black] (1,1.5) rectangle (1.5,2) node[midway] {\scriptsize $t_2$};
\draw[draw=none, fill=orange!40!white] (1.5,1.5) rectangle (2,2) node[midway] {\scriptsize $t_4$};
%\draw[draw=none, fill=orange!40!white] (2,1.5) rectangle (2.5,2);

\draw[draw=none, fill=orange!40!white] (0,2) rectangle (0.5,2.5) node[midway] {\scriptsize $t_4$};
\draw[draw=none, fill=orange!40!white] (0.5,2) rectangle (1,2.5) node[midway] {\scriptsize $t_4$};
\draw[draw=none, fill=orange!40!white] (1,2) rectangle (1.5,2.5) node[midway] {\scriptsize $t_4$};
\draw[draw=none, fill=orange!25!black] (1.5,2) rectangle (2,2.5) node[midway] {\scriptsize \color{white}{$t_1$}};
%\draw[draw=none, fill=orange!60!black] (2,2) rectangle (2.5,2.5);

%\draw (-0.25,0) -- (2.25,0);
\draw (-0.25,0.5) -- (2.25,0.5);
\draw (-0.25,1) -- (2.25,1);
\draw (-0.25,1.5) -- (2.25,1.5);
\draw (-0.25,2) -- (2.25,2);
\draw (-0.25,2.5) -- (2.25,2.5);

\draw (0,0.25) -- (0,2.75);
\draw (0.5,0.25) -- (0.5,2.75);
\draw (1,0.25) -- (1,2.75);
\draw (1.5,0.25) -- (1.5,2.75);
\draw (2,0.25) -- (2,2.75);
%\draw (2.5,-0.25) -- (2.5,2.75);

\begin{scope}[xshift=-1cm]

%\draw[orange!25!black] (5.75,0.25) -- (8.25,0.25);
\draw[gray!50!white] (5.75,0.75) -- (7.75,0.75);
\draw[gray!50!white] (5.75,1.25) -- (7.75,1.25);
\draw[gray!50!white] (5.75,1.75) -- (7.75,1.75);
\draw[gray!50!white] (5.75,2.25) -- (7.75,2.25);

\draw[gray!50!white] (6,0.5) -- (6,2.5);
\draw[gray!50!white] (6.5,0.5) -- (6.5,2.5);
\draw[gray!50!white] (7,0.5) -- (7,2.5);
\draw[gray!50!white] (7.5,0.5) -- (7.5,2.5);
%\draw[gray!50!white] (8,0) -- (8,2.5);

%\draw[thick, orange!40!white, -stealth, rotate around={30:(6,0.25)}] (6,0.1) -- (6,0.5);
%\draw[thick, orange!60!black, -stealth, rotate around={210:(6.5,0.25)}] (6.5,0.1) -- (6.5,0.5);
%\draw[thick, orange!40!white, -stealth, rotate around={30:(7,0.25)}] (7,0.1) -- (7,0.5);
%\draw[thick, gray!50!white, -stealth, rotate around={300:(7.5,0.25)}] (7.5,0.1) -- (7.5,0.5);
%\draw[thick, orange!40!white, -stealth, rotate around={30:(8,0.25)}] (8,0.1) -- (8,0.5);

\draw[thick, orange!25!black, -stealth, rotate around={300:(6,0.75)}] (6,0.6) -- (6,1);
\draw[thick, orange!25!black, -stealth, rotate around={300:(6.5,0.75)}] (6.5,0.6) -- (6.5,1);
\draw[thick, orange!75!gray, -stealth, rotate around={120:(7,0.75)}] (7,0.6) -- (7,1);
\draw[thick, orange!60!black, -stealth, rotate around={210:(7.5,0.75)}] (7.5,0.6) -- (7.5,1);
%\draw[thick, orange!75!gray, -stealth, rotate around={120:(8,0.75)}] (8,0.6) -- (8,1);

\draw[thick, orange!75!gray, -stealth, rotate around={120:(6,1.25)}] (6,1.1) -- (6,1.5);
\draw[thick, orange!40!white, -stealth, rotate around={30:(6.5,1.25)}] (6.5,1.1) -- (6.5,1.5);
\draw[thick, orange!40!white, -stealth, rotate around={30:(7,1.25)}] (7,1.1) -- (7,1.5);
\draw[thick, orange!75!gray, -stealth, rotate around={120:(7.5,1.25)}] (7.5,1.1) -- (7.5,1.5);
%\draw[thick, orange!25!black, -stealth, rotate around={300:(8,1.25)}] (8,1.1) -- (8,1.5);

\draw[thick, orange!25!black, -stealth, rotate around={300:(6,1.75)}] (6,1.6) -- (6,2);
\draw[thick, orange!40!white, -stealth, rotate around={30:(6.5,1.75)}] (6.5,1.6) -- (6.5,2);
\draw[thick, orange!60!black, -stealth, rotate around={210:(7,1.75)}] (7,1.6) -- (7,2);
\draw[thick, orange!40!white, -stealth, rotate around={30:(7.5,1.75)}] (7.5,1.6) -- (7.5,2);
%\draw[thick, orange!40!white, -stealth, rotate around={30:(8,1.75)}] (8,1.6) -- (8,2);

\draw[thick, orange!40!white, -stealth, rotate around={30:(6,2.25)}] (6,2.1) -- (6,2.5);
\draw[thick, orange!40!white, -stealth, rotate around={30:(6.5,2.25)}] (6.5,2.1) -- (6.5,2.5);
\draw[thick, orange!40!white, -stealth, rotate around={30:(7,2.25)}] (7,2.1) -- (7,2.5);
\draw[thick, orange!25!black, -stealth, rotate around={300:(7.5,2.25)}] (7.5,2.1) -- (7.5,2.5);
%\draw[thick, orange!60!black, -stealth, rotate around={210:(8,2.25)}] (8,2.1) -- (8,2.5);

%\draw[fill=black,draw=white, thick] (6, 0.25) circle (1.5pt);
%\draw[fill=black,draw=white, thick] (6.5, 0.25) circle (1.5pt);
%\draw[fill=black,draw=white, thick] (7, 0.25) circle (1.5pt);
%\draw[fill=black,draw=white, thick] (7.5, 0.25) circle (1.5pt);
%\draw[fill=black,draw=white, thick] (8, 0.25) circle (1.5pt);

\draw[fill=black,draw=white, thick] (6, 0.75) circle (1.5pt);
\draw[fill=black,draw=white, thick] (6.5, 0.75) circle (1.5pt);
\draw[fill=black,draw=white, thick] (7, 0.75) circle (1.5pt);
\draw[fill=black,draw=white, thick] (7.5, 0.75) circle (1.5pt);
%\draw[fill=black,draw=white, thick] (8, 0.75) circle (1.5pt);

\draw[fill=black,draw=white, thick] (6, 1.25) circle (1.5pt);
\draw[fill=black,draw=white, thick] (6.5, 1.25) circle (1.5pt);
\draw[fill=black,draw=white, thick] (7, 1.25) circle (1.5pt);
\draw[fill=black,draw=white, thick] (7.5, 1.25) circle (1.5pt);
%\draw[fill=black,draw=white, thick] (8, 1.25) circle (1.5pt);

\draw[fill=black,draw=white, thick] (6, 1.75) circle (1.5pt);
\draw[fill=black,draw=white, thick] (6.5, 1.75) circle (1.5pt);
\draw[fill=black,draw=white, thick] (7, 1.75) circle (1.5pt);
\draw[fill=black,draw=white, thick] (7.5, 1.75) circle (1.5pt);
%\draw[fill=black,draw=white, thick] (8, 1.75) circle (1.5pt);

\draw[fill=black,draw=white, thick] (6, 2.25) circle (1.5pt);
\draw[fill=black,draw=white, thick] (6.5, 2.25) circle (1.5pt);
\draw[fill=black,draw=white, thick] (7, 2.25) circle (1.5pt);
\draw[fill=black,draw=white, thick] (7.5, 2.25) circle (1.5pt);
%\draw[fill=black,draw=white, thick] (8, 2.25) circle (1.5pt);

\end{scope}
\end{scope}

\end{tikzpicture}
\caption{In the reduction $\TILE\to\GSE$, 
(a) every tile $t_i$ is mapped to a  spin state $s_i$. 
(b) Every (valid and invalid) tiling maps to a spin configuration. A tiling of size $n$ is valid iff the corresponding spin configuration is the ground state of $H_n$ with energy $0$.}
\label{fig:GSEReduction}
\end{figure}

Let us show that $\GSE$ is $\RE$-hard and $\BGSE$ is $\coNP$-hard by a reduction $\mathcal{R}:\TILE\to\GSE$ (see \Cref{fig:GSEReduction}).
Given a set of tiles $\mathcal{T} = \{t_1, \ldots, t_k\}$, we define the set of spin states as the set of tiles $\mathcal{S} \coloneqq \mathcal{T}$. Since each tile is specified by four colors in a color space $C$, it can be represented as a $4$-tuple
$$t_{i} = \left(t_{i}^{N}, t_{i}^{E}, t_{i}^{S}, t_{i}^{W}\right)$$
where the entries represent the colors on the top, right, bottom, and left of the tile. We define the coupling function so that a valid tiling with $t_1$ in the origin maps to a spin configuration of energy $0$, 
and every inconsistent color pairing in an invalid tiling gives an additional energy penalty of $1$. More precisely,
$$
h^{x}(s, \hat{s}) \coloneqq 1 - \delta(s^E, \hat{s}^W) \ \text{ and } \ h^{y}(s, \hat{s}) \coloneqq 1 - \delta(s^N, \hat{s}^S).$$
where $s,\hat{s}\in \mathcal{S}$. According the definition of $H_n$, the first component of $h^{x}$ addresses the spin on the left and the second the spin on the right while the first component of $h^{y}$ addresses the spin on the bottom and the second the spin on the top.
Moreover, we define
$$h^{\text{loc}}(s) \coloneqq 1 - \delta(s, t_1).
$$ 
Note that $H_n$ has a ground state of energy zero if and only if there exists a valid tiling of $\mathbb{Z}_n^2$ with tile $t_1$ at the origin. 
That is, $E_{\min}(H_n) > 0$ if and only if there is no valid tiling of size $n$. 
This guarantees that $\mathcal{R}$ is a reduction.  
Additionally, we obtain a reduction from $\BTILE$ to $\BGSE$ since the bounding parameters are identical. 
Similar to the tiling problem, one can show that $\GSE$ is $\RE$-complete and $\BGSE$ is $\coNP$-complete.

Note that non-translational invariant versions of $\BGSE$ are known to be $\coNP$-hard since their negated versions are $\NP$-hard.
In particular,  the ground state energy problem for 2D Ising models with fields is $\NP$-complete \cite{Ba82}.  

\section{Conclusions and Outlook}
\label{sec:questions}

In this work, we have shown a relation between the hardness of an (unbounded) problem and the hardness of its bounded version. In particular, we have defined a bounded version of a language (\Cref{def:boundedVersion}) and given a condition under which a reduction between the unbounded problems translates to a reduction between their bounded versions (\Cref{thm:boundedHardness}). We have also applied this result to two classes of examples (\Cref{sec:examples}): 
First, we showed that $\RE$-hard problems like $\LangPCP$, $\MPO$, or $\REACH$ have an $\NP$-hard bounded version; 
Second, we showed that $\RE$-hard problems like $\TILE$ and $\GSE$ have a $\coNP$-hard bounded version.

It would be interesting to  extend this work to problems in quantum physics such as the spectral gap problem \cite{Cu15,Ba18b} 
or membership problems for quantum correlations \cite{Sl19, Sl19b, Ji21, Fu21, Mo21}. 
A bounded version of the latter uses the dimension of the entangled state as the bounding parameter.

Another open question is whether the undecidability of Diophantine equations \cite{Ma70} and the $\NP$-hardness of its bounded version \cite{Ma76} fits into our framework \footnote{Recall that a Diophantine equation is a polynomial over the integers whose solutions need to be integers.}.
In this context, the unbounded problem is as follows:
\begin{quote}
\emph{Given a Diophantine equation $p(\mathbf{x}, \mathbf{y}) = 0$ with $2k$ variables, 
and a $k$-tuple of integers $\mathbf{a} \in \mathbb{Z}^k$, 
does there exist $\mathbf{b} \in \mathbb{Z}^k$ 
such that $p(\mathbf{a}, \mathbf{b}) = 0$?
}
\end{quote}
Note that here $k$ is fixed. 
The bounded version would restrict to values $\mathbf{b} \in \{-n, \ldots, n\}^k$, where $n$ acts as the bounding parameter. 

Are there also hard bounded versions with other types of complexity, such as $\QMA$-hard \cite{Wa08} bounded versions? While we only considered the scenario of $\RE$-hard problems with either $\NP$-hard or $\coNP$-hard bounded versions, 
there might be ``root problems'' whose bounded version is neither $\NP$-hard or $\coNP$-hard. 
Natural candidates for $\QMA$-hard bounded version are the bounded/unbounded satisfiability problems of quantum circuits \cite{Bo12}, which concerns Turing machines generating polynomial-size quantum circuits. 
The results of this work would imply that certain $\QMA$-hard problems, like the ground state energy problem for $k$-local quantum Hamiltonians \cite{Ke05}, relate to unbounded problems which are undecidable. 

Finally, is it possible to prove the converse direction of \Cref{thm:boundedHardness}? Since bounded languages give rise to a unique unbounded language, can every reduction between bounded versions be transferred to a reduction between the corresponding unbounded problems? If the bounded reduction is of the special form 
\begin{equation*}
\mathcal{R}_b: \langle x, n \rangle \mapsto \langle \mathcal{R}(x), p(n)\rangle
\end{equation*}
with $p$ being a strictly increasing polynomial, then $\mathcal{R}$ is automatically a reduction between the unbounded problems. Yet, the question is open for general $\mathcal{R}_b$.

\section{Acknowledgments}
This project originated at a group retreat together with the group of Hans Briegel and Thomas M\"uller in Obergurgl, and we thank all participants for the fruitful discussions. 
MVDE and AK acknowledge support of the Stand Alone Project P33122-N of the Austrian Science Fund (FWF). 
AK further acknowledges funding of the Austrian Academy of Sciences
(\"OAW) through the DOC scholarship 26547.
SS and TR acknowledge support
of the START Prize Y 1261-N of the Austrian Science Fund (FWF).

%apsrev4-2.bst 2019-01-14 (MD) hand-edited version of apsrev4-1.bst
%Control: key (0)
%Control: author (8) initials jnrlst
%Control: editor formatted (1) identically to author
%Control: production of article title (0) allowed
%Control: page (0) single
%Control: year (1) truncated
%Control: production of eprint (0) enabled
%

%\bibliography{bibliography.bib}

\newpage 
\appendix

\section{Background on computational complexity}
\label{app:comput}

In the following, we summarize the basic notions in computational complexity that are relevant for this paper. For an introduction to the topic, we refer to standard textbooks such as \cite{Ar09, Si06}.

\subsubsection{Deterministic and non-deterministic Turing machines} A \emph{(deterministic) Turing machine} is a model of computation consisting of a head, with an internal state, which operates on an infinitely long tape. In words, it works as follows. 
The input of a Turing machine is initially written on the tape. 
In each computation step, 
the head reads off one entry of the tape, 
it changes its internal state according to the symbol on the tape and its current state, 
it overwrites the symbol on the tape, 
and moves one cell left or right. 
The Turing machine repeats this procedure until it reaches a final state.

More formally, a Turing machine consists of the following: A tape alphabet $\Sigma$ with blank symbol \textvisiblespace \ $\in \Sigma$, a state set $Q$ with an initial state $q_{0}$ and final states $F \subseteq Q$, and a transition function 
$$\delta: (Q \setminus F) \times \Sigma \to Q \times \Sigma \times \{L, R\},
$$
which maps combinations of tape symbol and internal state 
to a new tape symbol and  internal state, 
 together with the instruction to move left or right. 

A \emph{non-deterministic Turing machine} is defined similarly, 
with the only difference that $\delta$ can be a multivalued function, 
i.e.\ a tuple $(x,q)$ can map to multiple state-symbol-direction triples. 
For this reason, a non-deterministic Turing machine has multiple computation paths. 
A non-deterministic Turing machine halts (within $k$ steps) if there is at least one computation path where it halts (within $k$ steps).

In this work, we often consider Turing machines with empty inputs. This means that every entry on the tape is initially given by the blank symbol \textvisiblespace.

\subsubsection{Decision problems and languages} Decision problems are given by a set of instances together with a question that splits the instance set into yes- and no-instances. A language $L \subseteq \Sigma^{*}$ of the problem is defined by encoding the set of instances using an alphabet $\Sigma$ and then collecting all yes-instances as elements in $L$. 
In this work, we address problems and their languages interchangeably via terms like $\NHALT, \LangPCP, \TILE,$ etc.

\subsubsection{Complexity classes and reductions}
Complexity classes are used to characterize the hardness of decision problems. 
 A language $L$ is \emph{decidable} if there exists a Turing machine $T$ which decides membership in finite time, i.e.\ if $x\in L$ then $T$ accepts $x$ in finite time, and 
 if  $x \notin L$ then $T$ rejects it in finite time. 

A language $L$ is \emph{recursively enumerable} (written $L \in \RE$) if and only if there exists a Turing machine $T$ such that for every yes-instance $x \in L$, there exists a finite certificate $y \in \Sigma^{*}$ for verification, i.e.\ $\langle x, y\rangle$ is accepted by $T$ in finite time. 
This means that there exists an algorithm that verifies $x \in L$ in finite time; 
yet, $x \notin L$ may not be rejected in finite time (since the class decidable is different from $\RE$). 

The complexity classes $\P$ and $\NP$ are defined similarly to decidable and $\RE$, respectively, with the only difference that we ask for efficient (i.e.\ in polynomial time) solution or verification. 
Specifically, a problem $L$ is in $\P$ if it can be decided by a polynomial-time deterministic Turing machine, 
i.e.\ a Turing machine that halts on input $x$ within $p(|x|)$ steps, where $p$ is a fixed polynomial and $|x|$ the length of the input string.
A problem $L$ is in $\NP$ 
if and only if 
there is a polynomial-time Turing machine $T$ such that 
\begin{equation}
\label{eq:NP}
x \in L \quad \Longleftrightarrow \quad \exists y \in \Sigma^{p(|x|)}: T \text{ accepts }\langle x, y\rangle.
\end{equation}
A language $L$ is in $\coNP$ if and only if its complement $L^{c} = \Sigma^*\setminus L$ is in $\NP$.
Specifically, $L$ is in $\coNP$ if and only if there exists a polynomial-time Turing machine $T'$ such that
\begin{equation}
\label{eq:coNP}
x \in L \quad \Longleftrightarrow \quad \forall y \in \Sigma^{p(|x|)}: T' \text{ accepts }\langle x, y\rangle,
\end{equation}
Given a polynomial-time Turing machine $T$ verifying $L^{c}$, $T'$ accepts if and only if $T$ rejects, which proves the statement.

In this work, we are mainly interested in hardness results. 
For a given complexity class $\mathcal{C}$, a problem is $\mathcal{C}$-hard if it is (in a formal way) at least as hard as any problem in $\mathcal{C}$. 
The problem is $\mathcal{C}$-complete if it is $\mathcal{C}$-hard and in $\mathcal{C}$. 

To formalize these concepts, we need the notion of a reduction between languages.  
A \emph{reduction from $L'$ to $L$} is a Turing-computable function $\mathcal{R}: \Sigma^{*} \to \Sigma^{*}$ which satisfies
$$ x \in L' \quad \Longleftrightarrow \quad \mathcal{R}(x) \in L.$$
By abuse of notation, we write  $\mathcal{R}: L' \to L$ to highlight the source and target language. 
If there exists a polynomial-time reduction from $L'$ to $L$, we  write $L' \leq_{\poly} L$.
Now we can define $\RE$-hardness: 
A problem $L$ is $\RE$-hard if there exists a reduction $L' \to L$ for every problem $L' \in \RE$. 
$L$ is $\RE$-complete if  $L$ is $\RE$-hard and  $L \in \RE$.
$\RE$-complete problems are in a formal sense the hardest problems in $\RE$. 
Similarly, a problem $L$ is $\NP$-hard if for every problem $L' \in \NP$ there exists a polynomial-time reduction $\mathcal{R}: L' \to L$. 
$L$ is $\NP$-complete if it is $\NP$-hard and $L \in \NP$. 
One defines $\coNP$-hardness analogously.

\section{Complexity of (bounded) halting problems}
\label{app:HaltingProofs}

In this appendix, we provide a detailed analysis of the two halting problems $\NHALT$ and $\NHALTAll$ together with their bounded versions which act as root problems in the main text (see \Cref{sec:root}). We start with the unbounded problems and their undecidability, and continue with their bounded versions and their complexity.

We start by noting that the input of  $\NHALT$ and $\NHALTAll$ is just a Turing machine $T$, as we ask whether $T$ halts on the empty tape.

\begin{definition}
Let $T$ be a description of a non-deterministic Turing machine. 
\begin{align*}
T \in \NHALT &\quad :\Longleftrightarrow \quad T \text{ halts on the empty tape.}\\
T \in \NHALTAll &\quad :\Longleftrightarrow \quad 
\begin{array}{l} T \text{ halts on the empty tape} \\ \text{along all paths.}\end{array}
\end{align*}
\end{definition}

Both problems are undecidable, as the following reduction from the halting problem $\HALT$ shows. 

\begin{theorem}
\label{thm:bnhalt}
$\NHALT$ and $\NHALTAll$ are $\RE$-complete.
\end{theorem}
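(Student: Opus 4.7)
The plan is to establish the upper and lower complexity bounds separately. For membership in $\RE$, I would exhibit verifiers with appropriate certificates. For $\NHALT$, the natural certificate is a finite sequence of non-deterministic choices describing a halting computation path: given such a certificate, a verifier simulates $T$ by following the prescribed choices and accepts iff the simulation reaches a final state in finitely many steps. For $\NHALTAll$, I would use a unary number $1^n$ as certificate representing a bound on the maximal halting time; given $T$ and $n$, the verifier enumerates all computation paths of $T$ of length at most $n$ (finite in number, since the transition relation has bounded out-degree) and accepts iff every such path terminates within $n$ steps.

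For the lower bound ($\RE$-hardness), the approach is to reduce from the deterministic halting problem on the empty tape, which is well known to be $\RE$-hard. More generally, starting from an instance $\langle T, x \rangle$ of $\HALT$, I would construct a deterministic Turing machine $T_x$ which first writes $x$ on an initially empty tape and then simulates $T$, so that $T_x$ halts on the empty tape iff $T$ halts on $x$. Since $T_x$ is deterministic, it is a (trivial) non-deterministic Turing machine with branching factor one; hence $T_x$ has a unique computation path, and halting along this single path is equivalent to halting along all paths. The same map $\langle T,x\rangle \mapsto T_x$ therefore serves as a polynomial-time reduction from $\HALT$ to both $\NHALT$ and $\NHALTAll$ simultaneously.

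The only step that requires mild care is the verifier for $\NHALTAll$, which must traverse the entire finite computation tree of depth $n$ rather than follow a single prescribed path; but this remains a finite computation because the branching factor of any fixed non-deterministic Turing machine is bounded by a constant depending only on the transition relation. Combining the two directions yields $\RE$-completeness for both problems. I expect no genuine obstacle here, since the reductions are immediate once the appropriate certificates are identified; the interest of these problems lies rather in their bounded versions, treated next.
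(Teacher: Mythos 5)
Your proposal is correct and follows essentially the same route as the paper: the same reduction from $\HALT$ (prepend writing $x$ to the tape, then simulate $T$, noting the resulting machine is trivially non-deterministic so one reduction covers both problems), and the same certificates for membership in $\RE$ (a halting computation path for $\NHALT$, a halting-time bound for $\NHALTAll$). No gaps.
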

\begin{proof}
We prove $\RE$-hardness only for $\NHALT$, as the same argument applies to $\NHALTAll$.
To this end, we provide a reduction from $\HALT$.  
Recall that $\HALT$ takes $\langle T,x_0\rangle$ as input  (where $T$ is a description of a deterministic Turing machine $T$ \footnote{Note that a deterministic Turing machine is a special case of a non-deterministic Turing machine only having one computational path}, and $x_0$ is an input) 
and accepts if and only if $T$ halts on $x_0$. 
The reduction transforms instance $\langle T,x_0\rangle$ 
to a  Turing machine $T'= \mathcal{R}(\langle T,x_0\rangle) $ which first writes $x_0$ on the tape, 
and then does the same computation as $T$ on the given input.  
By construction, $\langle T,x\rangle \in \HALT$ if and only if $T' \in \NHALT$, i.e.\ $\mathcal{R}$ is a valid reduction.

That $\NHALT\in \RE$ follows by taking the halting computation path as a certificate, 
and a verifier that verifies the computation along the path. 
That $\NHALTAll\in \RE$ follows by taking the halting time as a certificate, 
and a verifier that verifies that the computation halts along all paths within this halting time.
\end{proof}

Let us now consider the bounded versions $\BNHALT$ and $\BNHALTAll$. 
Since these problems have different complexity we will treat them separately.

\begin{definition}
Let $T$ be a description of a non-deterministic Turing machine, and $n \in \mathbb{N}$. 
$$
\langle T, 1^n \rangle \in \BNHALT \ \ :\Longleftrightarrow \ \ \begin{array}{l} T 
\text{ halts on the empty tape} \\ \text{in } n \text{ steps.} \end{array}
$$
\end{definition}

\begin{theorem}
\label{thm:bnhalt}
$\BNHALT$ is $\NP$-complete.
\end{theorem}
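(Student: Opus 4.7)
The plan is to prove both directions of $\NP$-completeness separately: membership in $\NP$ via a natural certificate, and $\NP$-hardness via a generic reduction from an arbitrary language in $\NP$. The crucial observation throughout is that the bound $n$ is encoded in unary, so $n$ is polynomial in the input length $|\langle T, 1^n\rangle|$.

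For membership in $\NP$, I would take as certificate the sequence of non-deterministic choices along a single halting computation path of $T$ on the empty tape, which has length at most $n$. The polynomial-time verifier then simulates $T$ step by step following these choices, halting after at most $n$ simulated steps, and accepts if and only if a final state is reached. Because $n$ is written in unary, this simulation runs in time polynomial in $|\langle T, 1^n\rangle|$, establishing $\BNHALT \in \NP$.

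For $\NP$-hardness, I would construct a reduction from an arbitrary $L \in \NP$. By definition of $\NP$, there exists a polynomial $p$ and a non-deterministic Turing machine $M_L$ that halts on input $x$ along some computation path within $p(|x|)$ steps if and only if $x \in L$. Given $x \in \Sigma^{*}$, define a non-deterministic Turing machine $T_x$ that first writes $x$ on the otherwise empty tape (in $c|x|$ deterministic steps for a fixed constant $c$ depending only on $M_L$), resets its head to the starting position, and then simulates $M_L$ on $x$. The description $T_x$ can be produced from $x$ in polynomial time, and $T_x$ halts on the empty tape along some path within $q(|x|) \coloneqq c|x| + p(|x|) + O(1)$ steps if and only if $x \in L$. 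The map
$$
x \ \mapsto \ \langle T_x, 1^{q(|x|)}\rangle
$$
is therefore a polynomial-time reduction from $L$ to $\BNHALT$. Since $L$ was arbitrary in $\NP$, $\BNHALT$ is $\NP$-hard.

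The only subtle point, and the main thing to verify carefully, is that the bound on the halting time of $T_x$ really is polynomial in $|x|$ and that the unary encoding $1^{q(|x|)}$ therefore has polynomial length, so that the reduction is genuinely polynomial-time; this works precisely because $q$ is a fixed polynomial depending on $M_L$ and the tape-initialization routine. Combining the two parts yields $\BNHALT \in \NP$ and $\NP$-hardness, hence $\NP$-completeness.
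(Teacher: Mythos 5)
Your proposal follows essentially the same route as the paper: membership via the halting computation path as certificate (polynomial because $n$ is unary), and hardness via a generic reduction that builds a machine writing $x$ on the empty tape and then simulating the $\NP$-machine for $L$, with the unary bound set to a fixed polynomial in $|x|$.

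There is one point you gloss over that the paper handles explicitly and that, as written, would break your reduction. You claim that ``by definition of $\NP$'' there is a non-deterministic machine $M_L$ that \emph{halts} on $x$ along some path within $p(|x|)$ steps if and only if $x \in L$. The standard definition gives a machine that \emph{accepts} along some path iff $x \in L$, but such a machine halts (in the polynomial bound) along \emph{all} paths on \emph{all} inputs, accepting or rejecting. If $T_x$ merely simulates $M_L$, then $T_x$ halts within $q(|x|)$ steps for every $x$, and your map sends every instance to a yes-instance of $\BNHALT$. The missing step --- step (iii) in the paper's construction of $P_{M,x}$ --- is to post-process the simulation so that accepting paths halt and rejecting paths enter an infinite loop; this costs only a constant number of extra steps and restores the equivalence $x \in L \Leftrightarrow \langle T_x, 1^{q(|x|)}\rangle \in \BNHALT$. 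The halting-based characterization of $\NP$ you invoke is true, but only because of exactly this modification, so it should be stated and carried out rather than assumed. With that fix, your argument matches the paper's.
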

\begin{proof}
To show that $\BNHALT$ is $\NP$-hard, we prove that every $\NP$-language $L$ has a polynomial-time reduction to $\BNHALT$. 
Since $L$ is in $\NP$, 
 there exists a non-deterministic polynomial-time Turing machine $M$ which accepts $x$ within time $p(|x|)$ if and only if $x\in L$.  
We construct a non-deterministic Turing machine $P_{M,x}$ that 
(i) writes $x$ on the tape, 
(ii) does the same computation as $M$ on the tape with input $x$, 
and (iii) 
if $M$ accepts  $x$ along a path, $P_{M,x}$ halts along this path, 
and if $M$ rejects $x$ along a path, $P_{M,x}$ loops along this path. 
Since step (i) needs a polynomial number $q(|x|)$ steps, 
 and step (iii) needs a constant number $k$ of steps, 
we have that $x \in L$ if and only if $\langle P_{M,x}, 1^{q(|x|) + k + p(|x|)}\rangle \in \BNHALT$.
Completeness follows from Equation \eqref{eq:NP} by choosing the halting computation path as a certificate,
 and a polynomial-time verifier which verifies the computation along this path. 
\end{proof}

Similarly, we define the problem $\BNHALTAll$ as the language accepting the instance $\langle T, 1^n \rangle$ if and only if $T$ halts on the empty tape along \emph{all} computation paths in at most $n$ steps.

\begin{definition}
Let $T$ be a description of a non-deterministic Turing machine $T$, and $n \in \mathbb{N}$.
$$
\langle T, 1^n \rangle \in \BNHALTAll  \quad :\Longleftrightarrow \quad \begin{array}{l} T \text{ halts on} \\ \text{the empty tape} \\ \text{along all paths} \\ \text{in } n \text{ steps.}\end{array}
$$
\end{definition}

\begin{theorem}
\label{thm:bnhalt-all}
$\BNHALTAll$ is $\coNP$-complete.
\end{theorem}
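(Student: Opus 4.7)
The plan is to mirror the structure of the proof of \Cref{thm:bnhalt} (which handled $\BNHALT$), but to dualise everything so that ``accept on some path'' becomes ``halt on every path'', and ``reject/loop'' is converted into the discriminating behavior. I would organise the proof as two parts: membership in $\coNP$ and $\coNP$-hardness.

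For membership, I would argue that the complement of $\BNHALTAll$ lies in $\NP$. An instance $\langle T, 1^n \rangle$ is a no-instance exactly when there exists a computation path of $T$ on the empty tape that has not reached a halting state after $n$ steps. Such a path can be described by the sequence of non-deterministic choices made in each of the first $n$ steps, which is a certificate of length polynomial in $n$ (hence polynomial in the input length, since $n$ is encoded in unary). A polynomial-time verifier simulates $T$ for $n$ steps following the given sequence of choices and checks that the configuration reached is not a halting one.

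For $\coNP$-hardness, I would take an arbitrary $L \in \coNP$ and reduce it to $\BNHALTAll$. By definition, $L^c \in \NP$, so there is a non-deterministic Turing machine $M$ running in time $p(|x|)$ for some polynomial $p$, such that $x \in L^c$ iff some computation path of $M$ on $x$ accepts. The idea is to build $P_{M,x}$ that (i) writes $x$ on the tape in some polynomial number $q(|x|)$ of steps, (ii) simulates $M$ on $x$, and, crucially, (iii) modifies the transition function so that every \emph{accepting} state of $M$ is replaced by a trivial loop (a transition back to itself that never leaves), while every \emph{rejecting} state remains a halting state. Then every path of $M$ that would have accepted becomes a non-halting path of $P_{M,x}$, and every rejecting path halts within $q(|x|) + p(|x|) + O(1)$ steps. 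Hence $x \in L$ iff $M$ has no accepting path on $x$ iff $P_{M,x}$ halts on the empty tape along all paths within $r(|x|) := q(|x|) + p(|x|) + c$ steps, which gives the polynomial-time reduction $x \mapsto \langle P_{M,x}, 1^{r(|x|)} \rangle$.

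The only subtle point, and the one I would be most careful with, is the construction of $P_{M,x}$ in step (iii): we must make sure that on every path that would have accepted in $M$, the modified machine genuinely never halts (so that such paths witness the no-instance), rather than merely halting slightly later. Installing an explicit self-loop on the (formerly) accepting states handles this cleanly. With that care taken, both directions of the reduction are immediate, and the rest is the usual polynomial-time bookkeeping.
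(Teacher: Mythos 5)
Your proof is correct and takes essentially the same approach as the paper: membership follows because a single non-halting computation path of length $n$ certifies the complement, and hardness follows by simulating a non-deterministic machine so that halting on all paths within a unary-encoded polynomial bound encodes membership in $L$. The only (immaterial) difference is that you start from the $\NP$ machine for $L^c$ and convert its accepting states into loops, whereas the paper starts from the co-non-deterministic machine for $L$ itself and makes its rejecting paths loop; these are dual phrasings of the same construction.
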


\begin{proof}
The hardness proof is very similar to \Cref{thm:bnhalt}. 
Namely, we prove that every $\coNP$-language $L$ has a polynomial-time reduction to $\BNHALTAll$. 
Since $L$ is in $\coNP$, 
there exists a non-deterministic polynomial-time Turing machine $M$ which accepts $x$ along every computation path of length at most $p(|x|)$ if and only if $x\in L$.  
We construct the non-deterministic Turing machine $P_{M,x}$ which 
(i) writes $x$ on the tape,
(ii) does the same computation as $M$ on the tape with input $x$,
and 
(iii) if $M$ accepts $x$ along a path, $P_{M,x}$ halts along this path. If $M$ rejects $x$ along a path, $P_{M,x}$ loops along this path. 
Since (i) needs a polynomial number $q(|x|)$ steps and (iii) needs a constant number $k$ of steps, 
we have that $x \in L$ if and only if $\langle P_{M,x}, 1^{q(|x|) + k + p(|x|)}\rangle \in \BNHALTAll$.
Completeness again follows from Equation \eqref{eq:coNP} by choosing computation paths as a certificate,
 and a polynomial-time verifier that verifies the computation along the given path.
\end{proof}

\section{More details on undecidable problems and their bounded versions }
\label{app:Examples}

In this appendix we provide more details on the undecidable problems and their bounded versions considered in the main text. 
Specifically, we consider the PCP problem (\cref{app:PCP}),
the zero in the upper left corner (\cref{app:ZULC}), 
the matrix mortality problem (\cref{app:MM})
the MPO positivity problem (\cref{app:MPO}),
 the polynomial positivity problem (\cref{app:POLY}),
the stability of positive maps  (\cref{app:STAB}), 
and   the tiling problem (\cref{app:Tiling}).

\subsection{The PCP problem}
\label{app:PCP}

We now provide the reduction  $\NHALT\to \LangPCP$ in greater detail. 
The following reduction modifies that of Ref.\ \cite{Si06}, so that the bounding parameters of both problems are  polynomially related.

We consider a Turing machine given by a tape alphabet $\Sigma$ with blank symbol \textvisiblespace \ $\in \Sigma$, a state set $Q$ with an initial state $q_{0}$, final states $F \subseteq Q$, and a transition function 
$$\delta: \Sigma \times (Q \setminus F) \to \Sigma \times Q \times \{L, R\}.$$
Without loss of generality, we consider here only semi-infinite tape Turing machines, i.e.\ having a tape with a left end but no right end. This is no restriction for the complexity since semi-infinite tape Turing machines are equivalent to standard Turing machines \cite[Claim 1.4]{Ar09}.

This Turing machine is mapped to the following set of dominoes $\mathcal{D}$:

\medskip\noindent
\renewcommand{\arraystretch}{2.5}
\setlength{\tabcolsep}{0.2em}
\begin{tabular}{r l c} 
(i) & An initial domino & \adjustbox{valign=m}{\begin{tikzpicture} \draw[thick, shade, blur shadow={shadow blur steps=8,shadow blur extra rounding=1.3pt, shadow scale=0.96}] (-2,-0.5) -- (0,-0.5) -- (0,-1.8) -- (-2,-1.8) -- cycle;
\draw[fill=white] (-2,-0.5) -- (0,-0.5) -- (0,-1.15) -- (-2,-1.15) -- cycle;
\draw[fill=white] (-2,-1.15) -- (0,-1.15) -- (0,-1.8) -- (-2,-1.8) -- cycle;
\draw[thick] (-2,-0.5) -- (0,-0.5) -- (0,-1.8) -- (-2,-1.8) -- cycle;
\draw (-2, -1.15) -- (0, -1.15);

\node at (-1, -0.82) {\scriptsize $!$};
\node at (-1, -1.52) {\scriptsize $! \, \gstar \, q_0 \, \gstar $ \textvisiblespace \ $ \gstar \, ! \, \gstar$};
\end{tikzpicture}} \\[0.5cm]
(ii) & For every $x \in \Sigma$, a copy domino  & \adjustbox{valign=m}{\begin{tikzpicture} \draw[draw=none, shade, blur shadow={shadow blur steps=8,shadow blur extra rounding=1.3pt, shadow scale=0.96}] (0.75,-0.5) -- (1.5, -0.5) -- (1.5,-1.8) -- (0.75,-1.8) -- cycle;
\draw[draw=none, fill=white] (0.75,-0.5) -- (1.5, -0.5) -- (1.5,-1.15) -- (0.75,-1.15) -- cycle;
\draw[draw=none, fill=white] (0.75,-1.8) -- (1.5, -1.8) -- (1.5,-1.15) -- (0.75,-1.15) -- cycle;
\draw[thick] (0.75,-0.5) -- (1.5, -0.5) -- (1.5,-1.8) -- (0.75,-1.8) -- cycle;

\draw (0.75, -1.15) -- (1.5, -1.15);

\node at (1.125,-0.82) {\scriptsize $\gstar \, x$};
\node at (1.125,-1.52) {\scriptsize $x \, \gstar $};
\end{tikzpicture}}\\[0.5cm]

(iii) & Transitions $(q,x) \mapsto (\hat{q},y,L)$ &  \adjustbox{valign=m}{\begin{tikzpicture} \draw[draw=none,fill=white, shade, blur shadow={shadow blur steps=8,shadow blur extra rounding=1.3pt, shadow scale=0.96}] (1.05,-0.5) -- (2.4,-0.5) -- (2.4,-1.8) -- (1.05,-1.8) -- cycle;
\draw[draw=none, fill=white] (1.05,-0.5) -- (2.4,-0.5) -- (2.4,-1.15) -- (1.05,-1.15) -- cycle;
\draw[draw=none, fill=white] (1.05,-1.8) -- (2.4,-1.8) -- (2.4,-1.15) -- (1.05,-1.15) -- cycle;
\draw[thick] (1.05,-0.5) -- (2.4,-0.5) -- (2.4,-1.8) -- (1.05,-1.8) -- cycle;
\draw (1.05, -1.15) -- (2.4, -1.15);

\node at (1.7, -0.82) {\scriptsize $\gstar \, x \, \gstar \, q$};
\node at (1.7, -1.52) {\scriptsize $\hat{q} \, \gstar \, y \, \gstar$};
\end{tikzpicture}}\\[0.5cm]

(iv) &  Transitions $(q,x) \mapsto (\hat{q},y,R)$ &  \adjustbox{valign=m}{\begin{tikzpicture} \draw[draw=none,fill=white, shade, blur shadow={shadow blur steps=8,shadow blur extra rounding=1.3pt, shadow scale=0.96}] (1.05,-0.5) -- (2.4,-0.5) -- (2.4,-1.8) -- (1.05,-1.8) -- cycle;
\draw[draw=none, fill=white] (1.05,-0.5) -- (2.4,-0.5) -- (2.4,-1.15) -- (1.05,-1.15) -- cycle;
\draw[draw=none, fill=white] (1.05,-1.8) -- (2.4,-1.8) -- (2.4,-1.15) -- (1.05,-1.15) -- cycle;
\draw[thick] (1.05,-0.5) -- (2.4,-0.5) -- (2.4,-1.8) -- (1.05,-1.8) -- cycle;
\draw (1.05, -1.15) -- (2.4, -1.15);

\node at (1.7, -0.82) {\scriptsize $\gstar \, q \, \gstar \, x$};
\node at (1.7, -1.52) {\scriptsize $y \, \gstar \, \hat{q} \, \gstar$};
\end{tikzpicture}}\\[0.5cm]

(v) & A tape expander & \adjustbox{valign=m}{\begin{tikzpicture} 
\draw [draw=none, shade, blur shadow={shadow blur steps=8,shadow blur extra rounding=1.3pt, shadow scale=0.96}] (0,-0.5) -- (1.2,-0.5) -- (1.2, -1.8) -- (0, -1.8) -- cycle;
\draw [draw=none, fill=white] (0,-0.5) -- (1.2,-0.5) -- (1.2, -1.15) -- (0, -1.15) -- cycle;
\draw [draw=none, fill=white] (0,-1.8) -- (1.2,-1.8) -- (1.2, -1.15) -- (0, -1.15) -- cycle;
\draw [thick] (0,-0.5) -- (1.2,-0.5) -- (1.2, -1.8) -- (0, -1.8) -- cycle;
\draw (0, -1.15) -- (1.2, -1.15);

\node at (0.6,-0.82) {\scriptsize $\gstar \, !$};
\node at (0.6,-1.52) {\scriptsize \textvisiblespace \, $\gstar \, ! \, \gstar$};
\end{tikzpicture}}\\[0.5cm]

(vi) & For every $q_f \in F$, $y_1,y_2 \in \Sigma$ & \adjustbox{valign=m}{\begin{tikzpicture}
\draw [draw=none, shade, blur shadow={shadow blur steps=8,shadow blur extra rounding=1.3pt, shadow scale=0.96}] (2,1) -- (4,1) -- (4,-0.3) -- (2,-0.3) -- cycle;
\draw [draw=none, fill=white] (2,1) -- (4,1) -- (4,0.35) -- (2,0.35) -- cycle;
\draw [draw=none, fill=white] (2,-0.3) -- (4,-0.3) -- (4,0.35) -- (2,0.35) -- cycle;
\draw [thick] (2,1) -- (4,1) -- (4,-0.3) -- (2,-0.3) -- cycle;
\draw (2, 0.35) -- (4, 0.35);

\node at (3, 0.65) {\scriptsize $\gstar \, y_{\scaleto{1}{3pt}} \, \gstar \, q_{\scaleto{f}{3pt}} \, \gstar \, y_{\scaleto{2}{3pt}}$};
\node at (3, 0) {\scriptsize $q_{\scaleto{f}{3pt}} \, \gstar$};
\end{tikzpicture}}\\[0.5cm]

(vii) & For every $q_f \in F$, $y_1,y_2 \in \Sigma$ & \adjustbox{valign=m}{\begin{tikzpicture}
\draw [draw=none, shade, blur shadow={shadow blur steps=8,shadow blur extra rounding=1.3pt, shadow scale=0.96}] (2,1) -- (4,1) -- (4,-0.3) -- (2,-0.3) -- cycle;
\draw [draw=none, fill=white] (2,1) -- (4,1) -- (4,0.35) -- (2,0.35) -- cycle;
\draw [draw=none, fill=white] (2,-0.3) -- (4,-0.3) -- (4,0.35) -- (2,0.35) -- cycle;
\draw [thick] (2,1) -- (4,1) -- (4,-0.3) -- (2,-0.3) -- cycle;
\draw (2, 0.35) -- (4, 0.35);

\node at (3, 0.65) {\scriptsize $\gstar \, q_{\scaleto{f}{3pt}} \, \gstar \, y_{\scaleto{1}{3pt}} \, \gstar \, y_{\scaleto{2}{3pt}}$};
\node at (3, 0) {\scriptsize $q_{\scaleto{f}{3pt}} \, \gstar$};
\end{tikzpicture}}\\[0.5cm]

(viii) & A final domino & \adjustbox{valign=m}{\begin{tikzpicture}
\draw [draw=none, shade, blur shadow={shadow blur steps=8,shadow blur extra rounding=1.3pt, shadow scale=0.96}] (2,1) -- (4,1) -- (4,-0.3) -- (2,-0.3) -- cycle;
\draw [draw=none, fill=white] (2,1) -- (4,1) -- (4,0.35) -- (2,0.35) -- cycle;
\draw [draw=none, fill=white] (2,-0.3) -- (4,-0.3) -- (4,0.35) -- (2,0.35) -- cycle;
\draw [thick] (2,1) -- (4,1) -- (4,-0.3) -- (2,-0.3) -- cycle;
\draw (2, 0.35) -- (4, 0.35);

\node at (3, 0.65) {\scriptsize $ \gstar \, q_{\scaleto{f}{3pt}} \, \gstar \,$ \textvisiblespace \ $\, \gstar \, ! \, \gstar \, !$};
\node at (3, 0) {\scriptsize $!$};
\end{tikzpicture}}
\end{tabular}
\renewcommand{\arraystretch}{1.2}

\medskip

\noindent Note that the domino set $\mathcal{D}$ can be constructed in polynomial time from $T$, and that $|\mathcal{D}|$ is polynomial in $|Q|$ and $|\Sigma|$.

Let us now apply this reduction to a non-deterministic Turing machine, 
as the bounded version needs the latter. 
First note that the exclamation marks serve as a separator between the instantaneous descriptions of different computation steps, 
while the grey star separates every symbol in the string. 
The lower part of the initial domino (i)   represents  the initial tape configuration of the Turing machine 
together with its current head state and position. 
Since the initial domino (i) is the only domino whose first upper and lower symbols coincide, every match has to start with the initial domino. 
A computation step along some computation path is simulated by applying copy-dominoes (ii), transition dominoes (iii), (iv), and tape expanders (v), according to \Cref{fig:PCPReduction}. If a computation reaches a final state $q_f$, 
the final instantaneous description is  successively removed  by applying dominoes (ii), (vi), (vii), and (v) according to \Cref{fig:PCPReduction}. 
Finally, a match is obtained by adding  (viii).

This implies that $T$ halts on the empty tape along a computation path if and only if $\mathcal{D}$ forms a match. 
Hence, $\mathcal{R} \colon \NHALT\to \LangPCP$ is a reduction. It follows that $\LangPCP$ is $\RE$-hard.

Note that simulating the $k^{\text{th}}$ computation step by a domino arrangement requires precisely $k+1$ dominoes. When $T$ reaches the final state after $n$ computation steps, the post-simulation procedure requires another $n+1$ repetitions,  where each procedure needs precisely $m = n + 1$ arrangements with length starting with $m$ and decreasing by $1$. So $T$ halts after $n$ computation steps on the empty tape if and only if the corresponding domino set forms a match in at most
\begin{align*}
q(n) \coloneqq 1 + \sum_{k=1}^{n} (k+1) + \sum_{k=1}^{n + 1} k = (n+1) \cdot (n+2)
\end{align*}
steps, where the first sum represents the computation procedure and the second sum the post-simulation procedure. Since $\mathcal{R}$ is a polynomial-time reduction, using \Cref{thm:boundedHardness}, this implies that
$$
\langle T, 1^n \rangle \mapsto \langle \mathcal{R}(T), 1^{(n+1) \cdot (n+2)} \rangle
$$
is a polynomial-time reduction from $\BNHALT$ to $\BPCP$, which shows that $\BPCP$ is $\NP$-hard.

\subsection{The Zero in the upper left corner problem} 
\label{app:ZULC}

We now present the reduction $\mathcal{R}: \LangPCP \to \ZULC$ based on the ideas of \cite{Ha01b}. 
For this purpose, we consider $\LangPCP$ using strings encoded in the alphabet $\Sigma = \{0,1,2\}$. We define the bijection $\sigma: \Sigma^{*} \to \mathbb{N}$ that assigns a representation in base $3$ to every natural number, i.e.\
$$ 
\sigma(c_1, \ldots, c_n) \coloneqq \sum_{i=1}^{n} c_i \cdot 3^{n-i}.
$$
Moreover, we define a function $\gamma: \Sigma^{*} \times \Sigma^{*} \to \mathbb{N}^{3 \times 3}$ via
$$
\gamma(w_1, w_2) \coloneqq \begin{pmatrix} 3^{|w_1|} & 0 & 0 \\ 0 & 3^{|w_2|} & 0 \\ \sigma(w_1) & \sigma(w_2) & 1 \end{pmatrix}.
$$

The function $\gamma$ is injective and a morphism, i.e.\ $\gamma(w_1 u_1, w_2 u_2) = \gamma(w_1, w_2) \cdot \gamma(u_1, u_2)$ where composition on $\Sigma^{*}$ 
is given by concatenation of words.
Let
$$d_1 = \left[ \frac{a_1}{b_1} \right], \ldots, d_k = \left[ \frac{a_k}{b_k} \right]$$
be an instance of $\LangPCP$ where $a_i, b_i \in \Sigma^{*}$. For $i \in \{1, \ldots, k\}$, we define the matrices
$$A_i = X \cdot \gamma(a_i, b_i) \cdot X^{-1} \quad B_i = X \cdot \gamma(a_i, 0 b_i) \cdot X^{-1}$$
with
$$
X = \begin{pmatrix} 1 & 0 & 1 \\ 1 & 1 & 0 \\ 0 & 0 & 1 \end{pmatrix}.
$$
We have that
$$d_{i_1} d_{i_2} \cdots d_{i_n}$$
is a matching domino if and only if
$$
(M_{i_1} \cdot M_{i_2} \cdots M_{i_n})_{11} = 0
$$
where $M_{i_j } \in \{ A_{i_j },   B_{i_j }\}$. 
We refer to \cite{Ha01b} for details. This shows that $\mathcal{R}: \LangPCP \to \ZULC$ 
with 
$$ 
\mathcal{R}\Big(\langle d_1, \ldots, d_{k} \rangle \Big) \coloneqq \langle A_1, \ldots, A_k, B_1, \ldots, B_k\rangle  
$$
is  a polynomial-time reduction. 
 This implies that $\ZULC$ is $\RE$-hard.

Since matches of length $n$ are mapped to matrix multiplications of length $n$ with a zero in the upper left corner, this shows that $\mathcal{R}_b: \BPCP \to \BZULC$ with
$$ \mathcal{R}_{b}\Big(\langle d_1, \ldots, d_{k}, 1^n\rangle\Big) \coloneqq \langle A_1, \ldots, A_k, B_1, \ldots, B_k, 1^n\rangle$$
is a polynomial-time reduction. 
This implies that $\BZULC$ is $\NP$-hard.

Note that the  matrices in $A_1, \ldots, A_k, B_1, \ldots, B_k$ are invertible, from which it follows that $\ZULC$ and $\BZULC$ remain $\RE$-hard and $\NP$-hard, respectively, when restricting the instances to invertible matrices.

\subsection{The Matrix mortality problem}
\label{app:MM}

We now construct the reduction $\mathcal{Q}: \ZULC \to \MM$ following the ideas of  \cite{Ha01b}. 
Since $\ZULC$ remains hard when restricting the instances to invertible matrices, we construct $\mathcal{Q}$ only for invertible matrices.
So let $\langle A_1, \ldots, A_k \rangle$ be an instance of invertible matrices in $\ZULC$. We define
$$\mathcal{Q}(\langle A_1, \ldots, A_k \rangle) \coloneqq \langle A_1, \ldots, A_k, B \rangle$$
with 
$$B = \begin{pmatrix} 1 & 0 & 0 \\ 0 & 0 & 0 \\ 0 & 0 & 0 \end{pmatrix}. 
$$

We claim that $A_1, \ldots, A_k$ forms a zero in the upper left corner if and only if $A_1, \ldots, A_k, B$ multiplies to a zero matrix. This proves that $\MM$ is $\RE$-hard. Moreover, we show that
\begin{equation}
\label{eq:threshold-map}
n_{\min, \MM}[\langle \mathbf{A}, B\rangle] = n_{\min, \ZULC}[\langle \mathbf{A} \rangle] + 2.
\end{equation}
where $\mathbf{A}$ represents the list $A_1, \ldots, A_k$. 

To prove the claim, first note that if 
$$
(A_{i_1} \cdot A_{i_2} \cdots A_{i_n})_{11} = 0,
$$
then 
$$
B \cdot A_{i_1} \cdot A_{i_2} \cdots A_{i_n} \cdot B = (A_{i_1} \cdot A_{i_2} \cdots A_{i_n})_{11} = 0.
$$
In other words, a yes-instance of $\ZULC$ with parameter $n$ is mapped to a yes-instance in $\MM$ with parameter $n+2$. This proves the inequality ``$\leq$'' of Equation \eqref{eq:threshold-map}. 

Conversely, assume that there exists a sequence of $n$ matrices in $\{A_1, \ldots A_k, B\}$ that multiplies to $\mathbf{0}$. Since $A_1, \ldots, A_k$ are invertible and $B$ has rank $1$, this sequence must contain $B$ at least twice. The product is of the form 
$$
M_{1} B M_{2} B M_{3} B \cdots B M_{r} = \mathbf{0}
$$
where $M_{i}$ is a multiplication of $\ell_{i}$ matrices in $\{A_1, \ldots, A_k\}$ for some $\ell_{i}$ \footnote{If it is an empty multiplication (i.e.\ $\ell_i = 0$), then we define $M_{i}$ as the identity matrix.}.
Since $B$ is idempotent, we have that
\begin{align*}
0 &= 
\big(M_{1} B M_{2} B M_{3} B \cdots B M_{r}\big)_{11} \\ 
&=  \big(BM_{1} B^2 M_{2} B^2 M_{3} B^2 \cdots B^2 M_{r}B\big)_{11} \\ 
&= \big(M_{1}\big)_{11} \cdots \big(M_{r}\big)_{11}.
\end{align*}
This implies that at least one of the matrices $M_{i}$ has a zero in the upper left corner, which shows that $A_1, \ldots, A_k$ form a zero in the upper left corner with a word of length $n$. 
Specifically, any minimal sequence of matrices realizing $\mathbf{0}$ must be of the form
$$
B \cdot A_{i_1} \cdot A_{i_2} \cdots A_{i_n} \cdot B = \mathbf{0} .
$$
Note that a shorter such product cannot exist because it would violate the proven inequality ``$\leq$'' of Equation \eqref{eq:threshold-map}. 
This representation proves the inequality ``$\geq$'' of Equation \eqref{eq:threshold-map}, since
$$(A_{i_1} \cdot A_{i_2} \cdots A_{i_n})_{11} = 0.$$

In summary, 
$\mathcal{Q}\colon \ZULC\to \MM$ is a reduction, which proves that $\MM$ is $\RE$-hard. Moreover, $\mathcal{Q}_{b}\colon  \BZULC \to \BMM$ with 
$$\mathcal{Q}_{b}: \langle A_1, \ldots, A_k, 1^n \rangle \mapsto \langle A_{1}, \ldots, A_k, B, 1^{n+2}\rangle$$
is a polynomial-time reduction too, which proves that $\BMM$ is $\NP$-hard.

\subsection{The MPO positivity problem}
\label{app:MPO}

Here we present a reduction $\mathcal{R}: \ZULC \to \MPO$, slightly different than \cite{De15}. The $\MPO$ problem has as input a fixed number of $D \times D$ integer matrices $\langle B_i : i \in \{1, \ldots, k\} \rangle$ and asks whether there exists a natural number $n \in \mathbb{N}$ such that
$$
\rho_n(B) \coloneqq \sum_{i_1, \ldots, i_n = 1}^{k} \tr\left(B_{i_1} \cdots B_{i_n}\right) \ket{i_1\ldots i_n} \bra{i_1 \ldots i_n}
$$
is not positive semidefinite. We define
$$\mathcal{R}(\langle A_1, \ldots, A_k\rangle) = \langle B_1, \ldots, B_k, B_{k+1}\rangle$$
where for $i\in \{1,\ldots, k\}$ 
$$B_i \coloneqq \begin{pmatrix} A_i \otimes A_i & 0 \\ 0 & 1\end{pmatrix}$$
and
$$
B_{k+1} \coloneqq \begin{pmatrix} E_{11} & 0 \\ 0 & -1 \end{pmatrix}
$$
where 
$E_{11} \coloneqq \mathbf{e}_1 \mathbf{e}_1^t $
with $ \mathbf{e}_1 =(1,0,\ldots,0)^t$ of length $D$. 

We now prove that  the threshold parameter $n$ in $\BZULC$ maps to the threshold parameter $n+1$ in $\BMPO$.
Let $A_{i_1}, \ldots, A_{i_n}$ be the minimal sequence such that
$$ \big(A_{i_1} \cdot A_{i_2} \cdots A_{i_n}\big)_{11} = 0.$$
Then,
$$\tr(B_{i_1} \cdots B_{i_n} \cdot B_{k+1}) = \big(A_{i_1} \cdot A_{i_2} \cdots A_{i_n}\big)_{11}^2 - 1 < 0.$$
Conversely, let $B_{i_1}, \ldots, B_{i_{n+1}}$ be a minimal sequence such that
$$
\tr(B_{i_1} \cdot B_{i_2} \cdots B_{i_{n+1}}) < 0. 
$$
The indices $i_1, \ldots, i_{n+1}$ cannot be chosen exclusively from $\{1, \ldots, k\}$, since in that case 
$$
\tr(B_{i_1} \cdot B_{i_2} \cdots B_{i_{n+1}}) = \big(\tr(A_{i_1} \cdots A_{i_{n+1}})\big)^2 + 1 \geq 0.
$$
Hence, there is at least one index $i_{\ell} = k+1$. Assume that there is precisely one index $k+1$.
Without loss of generality, we assume $i_{n+1} = k+1$ due to cyclicity of the trace.
This leads to
$$0 > \tr(B_{i_1} \cdot B_{i_2} \cdots B_{i_{n+1}}) = \big(\left(A_{i_1} \cdot A_{i_2} \cdots A_{i_n}\right)_{11} \big)^2 - 1$$
which implies that $\left(A_{i_1} \cdot A_{i_2} \cdots A_{i_n}\right)_{11} = 0$ because the entries are integer. This shows that a threshold parameter $n+1$ in $\BMPO$ maps to a threshold parameter of a most $n$ in $\BZULC$. Note that having multiple indices with $k+1$ leads to a smaller threshold parameter in $\BZULC$ which contradicts the minimality assumption of $B_{i_1}, \ldots, B_{i_{n+1}}$. This proves the statement.

This reduction can easily be extended to matrices with rational numbers. 

In summary, 
$\mathcal{R}\colon \ZULC\to \MPO$ is a reduction, which proves that $\MPO$ is $\RE$-hard. Moreover, by \Cref{thm:boundedHardness}, $\mathcal{R}_{b}\colon  \BZULC \to \BMPO$ with 
$$\mathcal{R}_{b}: \langle A_1, \ldots, A_k, 1^n \rangle \mapsto \langle B_{1}, \ldots, B_k, B_{k+1}, 1^{n+1}\rangle$$
is a polynomial-time reduction too, which proves that $\BZULC$ is $\NP$-hard.

\subsection{The Polynomial positivity problem}
\label{app:POLY}

Let us now review the reduction $\mathcal{R}: \MPO \to \POLY$ from \cite{De21b}. We define 
$$\mathcal{R}(\langle B_1, \ldots, B_k\rangle) \coloneqq \left\langle q_{\alpha, \beta}(\mathbf{x}) : \alpha, \beta = 1, \ldots, D \right\rangle$$
with
$$
q_{\alpha, \beta}(\underline{x}) \coloneqq \sum_{j=1}^{k} \left(B_j\right)_{\alpha, \beta} x_j^2,
$$
where $\underline{x}$ is a $k$-tuple of variables. It is clear that $\mathcal{R}$ is a polynomial-time function. We now prove that $\mathcal{R}$ is a reduction. 

If there exists a sequence of matrices such that
$$\tr(B_{i_1} \cdot B_{i_2} \cdots B_{i_n}) < 0$$
then
$$
p_n(\mathbf{e}_{i_1}, \ldots, \mathbf{e}_{i_n}) = \tr(B_{i_1} \cdot B_{i_2} \cdots B_{i_n}) < 0
$$
where $p_n$ is defined in \eqref{eq:pn}  
and $\mathbf{e}_\ell$ is the $\ell$th standard vector. This implies that $p_n$ is \emph{not} a nonnegative function. Conversely, if
$$\tr(B_{i_1} \cdot B_{i_2} \cdots B_{i_n}) \geq 0$$
for all indices $i_1, \ldots, i_n$, then $p_n$ is a sum-of squares which is also nonnegative.

This proves that the threshold $n$ for $\BMPO$ is mapped to the threshold $n$ for $\BPOLY$. 
It follows  that $\BPOLY$ is $\NP$-hard. 
Moreover, 
$\POLY$ is $\RE$-complete and $\BPOLY$ is $\NP$-complete by taking an arrangement of the matrices leading to a negative value as a certificate, 
and a polynomial-time verification procedure of this statement as a verifier.

\subsection{Stability of positive maps}
\label{app:STAB}

Let us now  review the reduction $\mathcal{R}: \MPO \to \STAB$ of \cite{Ey21}, 
which proves that $\STAB$ is $\RE$-hard. 
The same reduction also yields that $\BSTAB$ is $\NP$-hard. 

We map an instance 
$$
\langle B_1, \ldots, B_k \rangle \in \mathcal{M}_{D^2}(\mathbb{Q}) \cong \mathcal{M}_{D}(\mathbb{Q}) \otimes \mathcal{M}_D(\mathbb{Q})
$$ 
of $\MPO$ to a linear map
$$
\begin{array}{rrcl} \mathcal{P}:& \mathcal{M}_{D}(\mathbb{Q}) \otimes \mathcal{M}_D(\mathbb{Q}) & \to & \mathcal{M}_k(\mathbb{Q}) \\[0.2cm] 
& X & \mapsto & \displaystyle \sum_{i = 1}^{k} \ket{i} \bra{i} \tr(C_i X) \end{array} 
$$
where 
$$
(C_i)_{(\alpha_1, \alpha_2), (\beta_1, \beta_2)} \coloneqq (B_i)_{(\alpha_1, \beta_1), (\alpha_2, \beta_2)}
$$
with $\alpha_1, \alpha_2,\beta_1,\beta_2 \in \{1, \ldots, D\}$. Then, we have that
\begin{equation*}
\tr\Big(C_{i_1} \otimes \cdots \otimes C_{i_n} \chi_n\Big) = \tr(B_{i_1} \cdots B_{i_n})
\end{equation*}
where $\chi_n$ is defined in \eqref{eq:chin}. 
By construction, this implies that 
\begin{equation*}
\mathcal{P}^{\otimes n}(\chi_n) = \rho_n(B).
\end{equation*}

In summary, $\langle B_1, \ldots, B_k \rangle \in \MPO$ if and only if exists $n \in \mathbb{N}$ such that $\mathcal{P}^{\otimes n}(\chi_n) \ngeqslant 0$. Further the threshold parameters in both problems coincide for this reduction. 
It follows that $\BSTAB$ is $\NP$-hard.

\subsection{The tiling problem}
\label{app:Tiling}

We now review the reduction $\mathcal{R}: \HALT \to \TILE$ from \cite{Ro71}. A Turing machine, consisting of a tape alphabet $\Sigma$ with blank symbol \textvisiblespace \ $\in \Sigma$, a state set $Q$ with an initial state $q_{0}$ and final states $F \subseteq Q$, and a transition function
$$\delta: \Sigma \times (Q \setminus F) \to \Sigma \times Q \times \{L, R\}$$
is mapped to the following set of tiles:

\medskip\noindent
\renewcommand{\arraystretch}{2.5}
\setlength{\tabcolsep}{0.2em}
\begin{tabular}{r l c} (i) & Initial tile & \adjustbox{valign=m}{\begin{tikzpicture} 
\draw[very thick] (0,0) -- (0,1) -- (1,1) -- (1,0) -- cycle;
\draw[stealth-stealth] (0,0.5) -- (1,0.5);
\draw (0.5, 0.5) -- (0.5,1) node[fill=white] {\scriptsize $q_0$ \textvisiblespace};
\end{tikzpicture}} \\[0.25cm]
(ii) \label{ii} & Empty tape extension & \adjustbox{valign=m}{\begin{tikzpicture} 
\draw[very thick] (0,0) -- (0,1) -- (1,1) -- (1,0) -- cycle;
\draw[stealth-] (0,0.5) -- (0.5,0.5);
\draw[stealth-] (0.5,0.5) -- (1,0.5);
\draw (0.5, 0.5) -- (0.5,1) node[fill=white] {\scriptsize \textvisiblespace};
\end{tikzpicture}} \ and \ \adjustbox{valign=m}{\begin{tikzpicture} 
\draw[very thick] (0,0) -- (0,1) -- (1,1) -- (1,0) -- cycle;
\draw[-stealth] (0,0.5) -- (0.5,0.5);
\draw[-stealth] (0.5,0.5) -- (1,0.5);
\draw (0.5, 0.5) -- (0.5,1) node[fill=white] {\scriptsize \textvisiblespace};
\end{tikzpicture}}\\[0.25cm]
(iii) \label{iii} & Empty tile & \adjustbox{valign=m}{\begin{tikzpicture} 
\draw[very thick] (0,0) -- (0,1) -- (1,1) -- (1,0) -- cycle;
\end{tikzpicture}}\\[0.25cm]
(iv) \label{iv} & Transitions $(x,q) \mapsto (x', \hat{q},R)$ \hspace*{-0.8cm} & \adjustbox{valign=m}{\begin{tikzpicture} 
\draw[very thick] (0,0) -- (0,1) -- (1,1) -- (1,0) -- cycle;
\draw[stealth-] (0.5, 0.5) -- (0.5,0) node[fill=white] {\scriptsize $q x$};
\draw (0.5, 0.5) -- (0.5,1) node[fill=white] {\scriptsize $x'$};
\draw (0.5,0.5) -- (1,0.5) node[fill=white] {\scriptsize $\hat{q}$};
\end{tikzpicture}} \\[0.25cm]
(v) \label{v} & Transitions $(x,q) \mapsto (x', \hat{q},L)$ \hspace*{-0.8cm} & \adjustbox{valign=m}{\begin{tikzpicture} 
\draw[very thick] (0,0) -- (0,1) -- (1,1) -- (1,0) -- cycle;
\draw[stealth-] (0.5, 0.5) -- (0.5,0) node[fill=white] {\scriptsize $q x$};
\draw (0.5, 0.5) -- (0.5,1) node[fill=white] {\scriptsize $x'$};
\draw (0.5,0.5) -- (0,0.5) node[fill=white] {\scriptsize $\hat{q}$};
\end{tikzpicture}} \\[0.25cm]
(vi) \label{vi} & State merge & \adjustbox{valign=m}{\begin{tikzpicture} 
\draw[very thick] (0,0) -- (0,1) -- (1,1) -- (1,0) -- cycle;
\draw[stealth-] (0.5, 0.5) -- (0.5,0) node[fill=white] {\scriptsize $y$};
\draw (0.5, 0.5) -- (0.5,1) node[fill=white] {\scriptsize $\hat{q} y$};
\draw (0.5,0.5) -- (0,0.5) node[fill=white] {\scriptsize $\hat{q}$};
\end{tikzpicture}} \ and \ \adjustbox{valign=m}{\begin{tikzpicture} 
\draw[very thick] (0,0) -- (0,1) -- (1,1) -- (1,0) -- cycle;
\draw[stealth-] (0.5, 0.5) -- (0.5,0) node[fill=white] {\scriptsize $y$};
\draw (0.5, 0.5) -- (0.5,1) node[fill=white] {\scriptsize $\hat{q} y$};
\draw (0.5,0.5) -- (1,0.5) node[fill=white] {\scriptsize $\hat{q}$};
\end{tikzpicture}} \\[0.25cm]
(vii) \label{vii} & Copy tile for $x \in \Sigma$ & \adjustbox{valign=m}{\begin{tikzpicture} 
\draw[very thick] (0,0) -- (0,1) -- (1,1) -- (1,0) -- cycle;
\draw (0.5, 0.5) -- (0.5,1) node[fill=white] {\scriptsize $x$};
\draw[stealth-] (0.5, 0.5) -- (0.5,0) node[fill=white] {\scriptsize $x$};
\end{tikzpicture}}
\end{tabular}

\medskip
Note that \hyperref[vi]{(vi)}, State merge, is defined for every $y\in \Sigma$ and  $\hat q\in Q$, 
whereas  \hyperref[iv]{(iv)}  and  \hyperref[v]{(v)}, Transitions, are defined for every such tuple in $\delta$. 
 
This set of tiles captures the computation of a Turing machine on the empty tape when placing the initial tile to the origin (see \Cref{fig:Tiling}). 
The initial tile can only be extended to the left and to the right with \hyperref[ii]{(ii)} Empty tape extensions. 
We can also trivially tile the whole lower half of the plane by applying the empty tile. 
The generated string
$$ 
\ldots \quad \text{\textvisiblespace \quad \textvisiblespace \quad \textvisiblespace } \quad q_0 \; \text{\textvisiblespace \quad \textvisiblespace \quad \textvisiblespace \quad \textvisiblespace} \quad \ldots 
$$
at the top of the first line represents the instantaneous description of the Turing machine at time $0$, namely an empty tape with the head at position $0$ and state $q_0$. 
Simulating one step of the Turing machine corresponds to filling up the line above of the current one. 
Specifically, on the top of the initial tile, we need to place a Transition tile (\hyperref[iv]{(iv)}  or  \hyperref[v]{(v)})
$(q_0, \text{\textvisiblespace}) \mapsto (\hat{q}, x, L/R)$. 
Then we need to place a  \hyperref[vi]{(vi)}  State merge tile on the left/right of the transition tile. 
This reflects the movement of the head to the left or right. 
The rest of the line is filled with  \hyperref[vii]{(vii)} Copy tiles.

Again, the string at the top of the second line represents the initial description after one computation step. The same procedure applies to every computation step. As soon as we apply a transition tile $(q,x) \mapsto (q_f, y, L/R)$ for some final state $q_f \in F$, there is no tile to continue the tiling procedure. In other words, every tiling procedure terminates in line $n$ if and only if $T$ halts on the empty tape.

The same reduction applies to non-deterministic Turing machines. 
In this situation, every tiling procedure terminates in $n$ lines if and only if 
the Turing machine halts on the empty tape along every computation path in at most $n$ steps. 
In other words, a Turing machine $T$ halts on every path in at most $n$ steps 
if and only if $\mathbb{Z}_{n+1} \times \mathbb{Z}_{n+1}$ cannot be tiled. 
This proves that $\mathcal{R}: \NHALTAll \to \TILE$ is a reduction. 
It follows that $\TILE$ is $\RE$-hard. 

Moreover, $\mathcal{R}$ is a polynomial-time map. Since the map between the threshold parameters of $\NHALTAll$ and $\TILE$ is given by $n \mapsto n+1$, 
$$
\langle x,1^n\rangle \mapsto \langle \mathcal{R}(x), 1^{n+1}\rangle
$$ 
is a reduction from $\BNHALTAll$ to $\BTILE$. 
This implies that $\BTILE$ is $\coNP$-hard. 

\end{document}